\documentclass[11pt]{article} 

\usepackage[margin=1in]{geometry} 
\usepackage[utf8]{inputenc}
\usepackage[T1]{fontenc}
\usepackage{mathpazo} 

\usepackage{authblk}
\usepackage{comment,xspace}
\usepackage[normalem]{ulem} 

\usepackage{booktabs,multicol,multirow,subcaption}

\usepackage{enumitem} 
\usepackage[dvipsnames]{xcolor} 
\definecolor{ptblue}{RGB}{15,76,129} 
\definecolor{ptemerald}{HTML}{009473} 
\definecolor{ptgray}{HTML}{939597} 

\usepackage{tikz} 

\usepackage{amsmath,amsfonts,amssymb,amsthm,bbm,bm,mathtools}

\let\OLDland\land
\renewcommand{\land}{\:\OLDland\:}
\let\OLDlor\lor
\renewcommand{\lor}{\:\OLDlor\:}
\let\OLDforall\forall
\renewcommand{\forall}{\;\OLDforall\:}
\let\OLDexists\exists
\renewcommand{\exists}{\;\OLDexists\,}

\DeclareMathOperator*{\argmax}{arg\,max}

\usepackage[ruled,linesnumbered,vlined]{algorithm2e}
\SetKw{Break}{break}

\SetCommentSty{mycommentfont}

\usepackage[square,sort]{natbib} 
\usepackage{hyperref}
\hypersetup{
linktocpage,
colorlinks=true,
citecolor=ptemerald, 
urlcolor=ptblue, 
linkcolor=Plum, 
}
\usepackage{cleveref} 

\theoremstyle{plain}
\newtheorem{theorem}{Theorem}[section]

\newtheorem{lemma}[theorem]{Lemma}

\theoremstyle{definition}
\newtheorem{definition}[theorem]{Definition}
\newtheorem{observation}[theorem]{Observation}

\theoremstyle{remark}
\newtheorem*{remark}{\upshape\bfseries Remark}

\newcommand{\cnt}{\mathrm{cnt}\xspace}
\newcommand{\exan}{ex-ante\xspace}
\newcommand{\expo}{ex-post\xspace}

\makeatletter 
\newcommand{\EF}[1]{\if\relax\detokenize\expandafter{\@firstofone#1{}}\relax EF\xspace\else EF#1\fi}
\makeatother
\newcommand{\EFOne}{\EF{1}\xspace}
\newcommand{\EFX}{\EF{X}\xspace}
\newcommand{\EFM}{\EF{M}\xspace}

\newcommand{\indivisibleGoods}{M}
\newcommand{\mixedGoods}{\indivisibleGoods \cup D}
\newcommand{\alloc}{\mathcal{A}} 
\newcommand{\bundle}{A} 
\newcommand{\x}{\mathbf{x}} 
\newcommand{\X}{\mathbf{X}}
\newcommand{\mbar}{\overline{m}} 

\newcommand{\LocalSearch}{\texttt{\textup{LocalSearch}}}
\newcommand{\MatchAndFreeze}{\textsc{Match\&Freeze}\xspace}

\title{Best-of-Both-Worlds Fair Allocation \\
of Indivisible and Mixed Goods\thanks{A preliminary version appears in \textit{Proceedings of the 20th Conference on Web and Internet Economics (WINE)}~\citep{BuLiLi24}.}}
\author[1]{Xiaolin Bu}
\author[2]{Zihao Li}
\author[3]{Shengxin Liu}
\author[4]{Xinhang Lu}
\author[1]{Biaoshuai Tao}
\affil[1]{Shanghai Jiao Tong University, \nolinkurl{{lin_bu, bstao}@sjtu.edu.cn}}
\affil[2]{Nanyang Technological University, \nolinkurl{zihao004@e.ntu.edu.sg}}
\affil[3]{Harbin Institute of Technology, Shenzhen, \nolinkurl{sxliu@hit.edu.cn}}
\affil[4]{UNSW Sydney, \nolinkurl{xinhang.lu@unsw.edu.au}}
\date{}

\begin{document}
\maketitle

\begin{abstract}
We study the problem of fairly allocating either a set of indivisible goods or a set of mixed divisible and indivisible goods (i.e., mixed goods) to agents with additive utilities, taking the best-of-both-worlds perspective of guaranteeing fairness properties both \exan and \expo.
The ex-post fairness notions considered in this paper are relaxations of envy-freeness, specifically, EFX for indivisible-goods allocation, and EFM for mixed-goods allocation.
For two agents, we show that there is a polynomial-time randomized algorithm that achieves ex-ante envy-freeness and ex-post EFX / EFM simultaneously.
For~$n$ agents with bi-valued utilities, we show there exist randomized allocations that are (i) ex-ante proportional and ex-post EFM, and (ii) ex-ante envy-free, ex-post EFX, and ex-post fractionally Pareto optimal.
\end{abstract}


\section{Introduction}

Fair division studies the problem of fairly allocating scarce resources among agents with heterogeneous preferences over the resources.
It is a fundamental problem in society, with real-world applications such as divorce settlement, rent division, course allocation, and more.
Two classic fairness notions in the literature are \emph{envy-freeness (EF)} and \emph{proportionality (PROP)}.
An allocation is said to be envy-free if each agent values her own bundle weakly better than any other bundle in the allocation, and proportional if every agent gets a bundle of value at least~$1/n$ times her value for the entire resources, where~$n$ is the number of agents.

Our focus in this paper is on the settings of allocating \emph{indivisible} goods as well as a mixture of both divisible and indivisible goods (henceforth referred to as \emph{mixed} goods) when agents have additive utilities, which have received significant attention in recent years~\citep{AmanatidisAzBi23,LiuLuSu24,Suksompong21,Suksompong25}.
Despite being desirable properties, neither envy-freeness nor proportionality can always be satisfied when (deterministically) allocating indivisible or mixed goods among the agents.

To circumvent the issue, relaxations of the notions have been proposed and studied.
For instance, when allocating indivisible goods, \emph{envy-freeness up to any good (\EFX)} (resp., \emph{envy-freeness up to one good (\EFOne)}) requires that an agent's envy towards another agent should be eliminated after the hypothetical removal of any (resp., some) good from the latter agent's bundle~\citep{Budish11,CaragiannisKuMo19}.
While the existence of \EFX allocations is only known in special cases (cf.~\Cref{sec:intro:related-work}), the weaker notion of \EFOne can always be satisfied, even for any number of agents with arbitrary monotonic utilities~\citep{LiptonMaMo04}.
With mixed goods, \citet{BeiLiLi21} proposed \emph{envy-freeness for mixed goods (\EFM)}, which generalizes both envy-freeness and \EFOne in the following sense: An agent is envy-free towards any agent whose bundle contains some divisible goods and \EFOne towards the rest.
They showed that an \EFM allocation always exists for any number of agents with additive utilities.

An alternative and common method to achieve fairness is through randomization~\citep[see, e.g.,][]{AbdulkadirogluSo98,BogomolnaiaMo01,BudishChKo13,AkbarpourNi20,GolzPePr24}.
With the power of randomization, envy-freeness and proportionality can both be easily and trivially achieved by giving all goods to a single agent uniformly at random.
The realized allocation, however, is patently unfair since all agents but one are left empty-handed.

\citet{AzizFrSh24} timely introduced the \emph{best-of-both-worlds (BoBW)} approach, which combines the two aforementioned methods with the goal of constructing a randomized allocation (i.e., a lottery over deterministic allocations) that is exactly fair \exan (before the randomness is realized) and approximately fair \expo (after the randomness is realized).
They showed that \exan envy-freeness and \expo \EFOne can be simultaneously achieved when agents have additive utilities.
In this paper, our goal is to strengthen the \expo fairness guarantee to \EFX when allocating indivisible goods, and, for the first time, extend the study of best-of-both-worlds fairness to the mixed-goods setting.

\subsection{Our Results}

As is common in the literature of fair division, we assume that agents have additive utilities when allocating either indivisible or mixed goods.

In \Cref{sec:2-agent}, we focus on two agents.
While the work of \citet{FeldmanMaNa24} showed that there always exists a randomized allocation that is \exan EF and \expo \EFX, we improve upon the result by devising a \emph{polynomial-time} algorithm to compute such a randomized allocation.\footnote{An independent work of \citet{GargSharma2024} also gives a polynomial-time algorithm to achieve the same goal by using a different approach. In addition to \exan EF and \expo \EFX, the method of \citet{GargSharma2024} can additionally achieve \expo $4/5$-approximation of the \emph{maximin share}.}
Built upon this result, we then show with mixed goods, a randomized allocation that is simultaneously \exan \EF{} and \expo \EFM can be found in polynomial time.

Next, we consider the $n$-agent cases, with \Cref{sec:n-agent-mixed,sec:n-agent-indivisible} focusing on mixed-goods and indivisible-goods settings, respectively.
In both cases, we assume that agents have \emph{bi-valued} utilities, i.e., each agent's utility for each good belongs to one of two possible values.\footnote{Binary utilities (the two possible values are~$0$ and~$1$) are special cases.}

In \Cref{sec:n-agent-mixed}, we devise a polynomial-time algorithm to compute an integral allocation sampled from a randomized allocation that is \exan proportional and \expo \EFM.
Our result on the compatibility between \EFM and \exan fairness notions adds to the growing literature revolved around \EFM when allocating mixed resources~\citep{BhaskarSrVa21,LiLiLu23,LiLiLu24,NishimuraSu23,LiLiLi24}.

In \Cref{sec:n-agent-indivisible}, we devise a polynomial-time algorithm to compute an integral allocation sampled from a randomized allocation that is \exan EF, \expo \EFX, and \expo fPO.\footnote{Ex-post \emph{fractionally Pareto optimality (fPO)}, which can be found in \Cref{def:fPO}, is a stronger notion of economic efficiency than ex-post Pareto optimality (PO).}
This generalizes multiple results known in the literature.
For instance, the compatibility between \exan EF, \expo EFX, and \expo PO was only known for binary utilities~\citep{BabaioffEzFe21,HalpernPrPs20}.\footnote{The result of \citet{BabaioffEzFe21} works for \emph{binary submodular} (also known as \emph{matroid-rank}) utilities.}
For bi-valued utilities, we only knew the compatibility of \expo notions between EFX and PO~\citep{AmanatidisBiFi21} as well as between EFX and fPO~\citep{GargMu23}.

\subsection{Related Work}
\label{sec:intro:related-work}

The fair allocation of indivisible goods has received extensive attention in the past decades~\citep{AmanatidisAzBi23,Suksompong21,Suksompong25}.
\citet{LiuLuSu24} overviewed the recent developments of mixed-goods allocations.
Below, we first discuss EFX existence results, followed by an overview of best-of-both-worlds fairness in fair division.

\paragraph{EFX Existence}
The existence of \EFX allocations remains largely open, though progress has been made in special cases where the number of agents or agents' utility functions are restricted.
An \EFX allocation always exists for two agents with general utilities~\citep{PlautRo20}, for three agents~\citep{AkramiAlCh23,ChaudhuryGaMe24}, as well as for any number of agents with restricted utilities such as being identical~\citep{PlautRo20}, binary submodular~\citep{BabaioffEzFe21} or more general~\citep{BuSoYu23}, bi-valued~\citep{AmanatidisBiFi21,GargMu23}, or when there are two types of agents and agents of the same type have identical additive utilities~\citep{Mahara23}.
There have been lines of work focusing on approximately-\EFX allocations~\citep{AkramiAlCh23,AmanatidisMaNt20}, partial allocations that are \EFX~\citep{BergerCoFe22,CaragiannisGrHu19,ChaudhuryKaMe21}, or a combination of both~\citep{ChaudhuryGaMe23-MOOR}.

\paragraph{BoBW Fairness}
In addition to the compatibility result between \exan EF and \expo \EFOne mentioned above, \citet{AzizFrSh24} showed several impossibility results regarding achieving BoBW fair and economically efficient allocations.
For agents with weighted \emph{entitlements}, \exan weighted envy-freeness (WEF) is compatible with \expo weighted transfer envy-freeness up to one good, but not compatible with any stronger \expo WEF relaxation~\citep{AzizGaMi23,HoeferScVa23}.
For agents with subadditive utilities, \exan $\frac{1}{2}$-EF, \expo $\frac{1}{2}$-\EFX and \expo \EFOne can be achieved simultaneously~\citep{FeldmanMaNa24}.
Best-of-both-worlds fairness has also been explored for fair-share-based notions like proportionality and the \emph{maximin share (MMS)} guarantee for agents with additive~\citep{BabaioffEzFe22,AkramiGaSh24} or fractionally subadditive utilities~\citep{AkramiMeSe23}.
\citet{BabaioffEzFe22} showed \exan proportionality and \expo $\frac{1}{2}$-MMS are compatible.
The ex-post MMS approximation ratio was improved in~\citep{AkramiGaSh24}, at the cost of weakening ex-ante fairness guarantees.

Slightly further afield, the BoBW paradigm has also been applied to the contexts of collective choice such as committee voting~\citep{AzizLuSu23,SuzukiVo24} and participatory budgeting~\citep{AzizLuSu24}.

\section{Preliminaries}

For each positive integer~$t$, let $[t] \coloneqq \{1, 2, \dots, t\}$.
Denote by $N = [n]$ the set of~$n$ agents to whom we allocate resources.
In this paper, we consider both indivisible-goods and mixed-goods settings.
Let~$\mixedGoods$ be the set of mixed goods, where $\indivisibleGoods = \{g_1, g_2, \dots, g_m\}$ is the set of~$m$ indivisible goods and $D = \{d_1, d_2, \dots, d_{\mbar}\}$ is the set of~$\mbar$ \emph{homogeneous} divisible goods.
When $D = \emptyset$, we are concerned with the indivisible-goods setting.

\paragraph{Allocations}

A \emph{fractional allocation} of mixed goods~$\mixedGoods$ to the agents in~$N$ is specified by a non-negative $n \times (m + \mbar)$ matrix~$\X = (X_{ig})_{i \in N, g \in \mixedGoods}$ such that for each~$g \in \mixedGoods$, $\sum_{i \in N} X_{ig} = 1$; here, $X_{ig} \in [0, 1]$ denotes the fraction of good~$g$ assigned to agent~$i$.
The $i$-th row~$X_i$ of the matrix denotes the goods allocated to agent~$i$ in the fractional allocation.
When we simply say ``an allocation'', it will mean a fractional allocation, unless otherwise clear from the context.

A fractional allocation~$\X$ is \emph{integral} if $X_{ig} \in \{0, 1\}$ for every~$i \in N$ and every indivisible good~$g \in \indivisibleGoods$.
Given an integral allocation~$\X$, denote by $\indivisibleGoods_i \coloneqq \{g \in \indivisibleGoods \mid X_{ig} = 1\}$ the set of indivisible goods assigned to agent~$i$, and let the $\mbar$-dimensional vector~$\x_i = (x_{i1}, x_{i2}, \dots, x_{i\mbar}) \coloneqq (X_{ig})_{g \in D}$ represent the divisible goods received by agent~$i$.
We will refer to $\bundle_i = (\indivisibleGoods_i, \x_i)$ as the \emph{bundle} of mixed goods allocated to agent~$i$.
The integral allocation is written as $\alloc = (\bundle_i)_{i \in N}$.
We slightly abuse notation by also using~$\alloc$ as an $n \times (m + \mbar)$-matrix representation of the integral allocation.

A \emph{randomized allocation} is a probability distribution over integral allocations and specified by a set of~$s \in \mathbb{N}$ tuples $\{(p_j, \alloc_j)\}_{j \in [s]}$, where $p_j \in [0, 1]$, $\sum_{j \in [s]} p_j = 1$, and $\alloc_j$ is an integral allocation implemented with probability~$p_j$.
A randomized allocation $\{(p_j, \alloc_j)\}_{j \in [s]}$ is called an \emph{implementation} of (or, interchangeably, \emph{implements}) a fractional allocation~$\X$ if $\X = \sum_{j \in [s]} p_j \cdot \alloc_j$.

\paragraph{Utilities}
Each agent~$i \in N$ is associated with an \emph{additive} utility function~$u_i$, where $u_i(g) \geq 0$ denotes the agent's utility for fully receiving good~$g \in \mixedGoods$.
We say that agents have \emph{bi-valued} utilities if for each~$i \in N$ and~$g \in \mixedGoods$, $u_i(g) \in \{a, b\}$, where $0 \leq a < b$ are distinct, non-negative real values.\footnote{We will later use an agent's ``large good'' (resp., ``small good'') to refer to a good of value~$b$ (resp.,~$a$) to the agent.}
When $a = 0$ and $b = 1$, agents are said to have \emph{binary} utilities.
Given an allocation~$\X$, the utility of agent~$i$ is given by $u_i(\X) = u_i(X_i) = \sum_{g \in \mixedGoods} X_{ig} \cdot u_i(g)$.

\paragraph{Solution Concepts}
We now introduce fairness notions.
The first two concepts---envy-freeness~\citep{Foley67,Varian74} and proportionality~\citep{Steinhaus48}---are defined for allocations, fractional or integral.
An allocation~$\X$ is said to satisfy \emph{envy-freeness (\EF{})} if for any pair of agents~$i, j \in N$, $u_i(X_i) \geq u_i(X_j)$, and \emph{proportionality (PROP)} if for any agent~$i \in N$, $u_i(X_i) \geq \frac{u_i(\mixedGoods)}{n}$.
Our next notions are compelling relaxations of envy-freeness when concerning integral allocations of indivisible goods.

\begin{definition}[\EFOne~\citep{LiptonMaMo04,Budish11} and \EFX~\citep{CaragiannisKuMo19,PlautRo20}]
An integral allocation $(\indivisibleGoods_i)_{i \in N}$ of indivisible goods~$\indivisibleGoods$ is said to satisfy
\begin{itemize}
\item \emph{envy-freeness up to one good (\EFOne)} if for any pair of agents~$i, j \in N$ such that $\indivisibleGoods_j \neq \emptyset$, we have $u_i(\indivisibleGoods_i) \geq u_i(\indivisibleGoods_j \setminus \{g\})$ for some~$g \in \indivisibleGoods_j$;

\item \emph{envy-freeness up to any good (\EFX)} if for any pair of agents~$i, j \in N$ such that $\indivisibleGoods_j \neq \emptyset$, we have $u_i(\indivisibleGoods_i) \geq u_i(\indivisibleGoods_j \setminus \{g\})$ for all~$g \in \indivisibleGoods_j$.
\end{itemize}
\end{definition}

With mixed goods, the following notion combines and generalizes both envy-freeness and \EFOne.

\begin{definition}[\EFM~\citep{BeiLiLi21}]
An integral allocation~$(\bundle_i)_{i \in N}$ of mixed goods~$\mixedGoods$ is said to satisfy \emph{envy-freeness for mixed goods (\EFM)} if for any pair of agents~$i, j \in N$,
\begin{itemize}
\item if agent~$j$'s bundle consists of only indivisible goods (i.e., $\bm{x}_j = \mathbf{0}$) and $M_j\neq \emptyset$, there exists some~$g \in \indivisibleGoods_j$ such that $u_i(\bundle_i) \geq u_i(\indivisibleGoods_j \setminus \{g\})$;
\item otherwise, $u_i(\bundle_i) \geq u_i(\bundle_j)$.
\end{itemize}
\end{definition}

Finally, we introduce an economic efficiency notion of importance in the context of fair division.

\begin{definition}[fPO]
\label{def:fPO}
An integral allocation $(\indivisibleGoods_i)_{i \in N}$ of indivisible goods~$\indivisibleGoods$ is said to satisfy \emph{fractionally Pareto optimal (fPO)} if there is no fractional allocation~$\mathbf{Y}$ of indivisible goods~$\indivisibleGoods$ such that $u_i(Y_i) \geq u_i(\indivisibleGoods_i)$ for all agents~$i \in N$ and $u_j(Y_j) > u_j(\indivisibleGoods_j)$ for some agent~$j \in N$.
\end{definition}

We say a randomized allocation satisfies a property \exan (resp., \expo) if the fractional allocation it implements (resp., every integral allocation in its support) satisfies the property.
This paper concerns the problem of designing randomized algorithms that simultaneously achieve desirable properties both \exan and \expo.
Our algorithms do not explicitly output the desired randomized allocation and instead \emph{sample} integral allocations from their supports.

\section{Two Agents}
\label{sec:2-agent}

In this section, we study the best-of-both-worlds fair allocation of indivisible and mixed goods for two agents.
With indivisible goods, \citet{FeldmanMaNa24} showed the existence of a randomized allocation that is \exan EF and \expo EFX.
We improve upon their result by presenting a \emph{polynomial-time} algorithm to identify such an allocation.
We then leverage this algorithm to provide a randomized allocation of mixed goods that is both \exan EF and \expo EFM in polynomial time.

An independent work of \citet{GargSharma2024} gives a fundamentally different polynomial-time algorithm to achieve the same goal, \exan EF and \expo EFX.
In addition, their algorithm can achieve the extra fairness guarantee of \expo $\frac{4}{5}$-MMS.

In Proposition~A.2 and Lemma~A.1 of \citet{FeldmanMaNa24}, the requirement for \expo EFX involves initially identifying two allocations, denoted as $\mathcal{A}^1$ and $\mathcal{A}^2$, which minimize the difference between two bundles under utility functions~$u_1$ and~$u_2$ respectively. The cut-and-choose protocol is then employed to determine the owner of each bundle in both allocations. The minimization of differences ensures that, for each agent $i\in[2]$, the loss in allocation $\mathcal{A}^i$ can be bounded by the gain in allocation $\mathcal{A}^{3-i}$, thereby ensuring \exan EF.
Their mechanism involves an NP-hard step of finding an allocation that minimizes the difference.
In contrast, our focus is solely on identifying two corresponding EFX allocations under utility functions~$u_1$ and~$u_2$, ensuring that the loss in $\mathcal{A}^i$ is bounded by the gain in $\mathcal{A}^{3-i}$ for each agent $i$.
Motivated by this, we present our algorithm, whose pseudocode can be found in \Cref{alg:efx_two_agents} in \Cref{append:twoagents}.

In \Cref{alg:efx_two_agents}, we incorporate a subroutine called \LocalSearch\ (\crefrange{alg_two_agents_localsearch_begin}{alg_two_agents_localsearch_end}), which, given an arbitrary allocation $(A, B)$ and a utility function~$u$, returns an integral EFX allocation with respect to~$u$.\footnote{While our \LocalSearch\ technique and the proof technique used by \citet{PlautRo20} to show their Theorem~4.2 share certain similarities, as we both update the current non-\EFX allocation by moving some good from a bundle of higher value to a bundle of lower value, there are differences as well, mainly because we consider different utility classes.
For instance, \citeauthor{PlautRo20} showed the \emph{leximin++ solution} is \EFX for $n$ agents with general but identical utilities.
In addition to bundles' values, the leximin++ solution also takes into account of bundles' sizes.
But since we focus on two agents with additive utilities, our \LocalSearch\ only updates allocations based on bundles' values.}
The following lemma shows this subroutine produces in polynomial time an EFX allocation with a weakly smaller utility difference between its two bundles compared to the provided allocation~$(A, B)$.
Its proof, along with all other omitted proofs, can be found in the appendices.

\begin{lemma}
\label{lem:two_agents_localsearch}
\LocalSearch$(A, B, u)$ returns in polynomial time an integral EFX allocation~$(A', B')$ under utility function~$u$ with $|u(B') - u(A')| \leq |u(B) - u(A)|$.
\end{lemma}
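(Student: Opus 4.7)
The plan is to analyze the following local-search procedure: maintain a current two-bundle allocation, initialized to the input $(A, B)$, and at each iteration either halt (if the current allocation is \EFX under $u$) or transfer a single carefully chosen good from the higher-valued bundle to the lower-valued one. The returned allocation is \EFX by the exit condition of the loop, so the substantive work is (i) to prove that $|u(B) - u(A)|$ does not grow along the execution, and (ii) to bound the number of iterations by a polynomial in $m$.

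For the difference invariant, I would reason as follows. Suppose at some iteration WLOG $u(A) \le u(B)$ and the allocation violates \EFX. Then by the definition of \EFX there exists a good $g \in B$ with $u(A) < u(B \setminus \{g\})$, i.e., $u(g) < u(B) - u(A)$. The transfer rule moves such an offending good $g$ from $B$ to $A$, and a one-line calculation shows that the new absolute difference equals $|u(B) - u(A) - 2u(g)|$. The bounds $0 \le u(g) < u(B) - u(A)$ then immediately give that this quantity lies in $[0, u(B) - u(A))$, so the invariant holds strictly at every step; telescoping across the whole run yields the required $|u(B') - u(A')| \le |u(B) - u(A)|$.

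The main obstacle is polynomial-time termination. Strict decrease of the real-valued potential $|u(B) - u(A)|$ is not by itself sufficient, since the decrements can be arbitrarily small and the potential can take exponentially many distinct values. I would resolve this by specifying the transfer rule more precisely---for instance, always moving the smallest positive-valued good of the higher-valued bundle---and then exhibiting a combinatorial potential that strictly improves each step and ranges over only polynomially many values. A plausible route is to argue that under such a rule, the cardinality of the larger bundle is monotonically non-increasing along role-preserving iterations, and role swaps between the two bundles can occur only $O(1)$ times at the end of such a sequence, which together yield an $O(m)$ iteration bound. Fleshing out the precise potential and handling boundary cases (zero-valued goods, value ties between bundles, or the degenerate case where $B$ contains only one offending good) is where the bulk of the technical effort lies; the difference invariant and the \EFX condition at termination are, by comparison, essentially mechanical to verify.
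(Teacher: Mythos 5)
Your exit-condition and gap-invariant arguments coincide with the paper's and are fine: the loop halts only when no $g \in B'$ has $u(A' \cup \{g\}) < u(B')$, which is exactly \EFX for two bundles, and moving an offending good $g$ (so $u(g) < u(B)-u(A)$) changes the gap from $D$ to $|D - 2u(g)| \leq D$, which telescopes to $|u(B')-u(A')| \leq |u(B)-u(A)|$. The genuine gap is precisely where you defer ``the bulk of the technical effort'': polynomial termination is never established, and the route you sketch does not work as stated. First, your transfer rule ``move the smallest positive-valued good of the higher-valued bundle'' can select a \emph{non-offending} good and break the invariant you just proved: take $A=\{g_0\}$ with $u(g_0)=4$ and $B$ containing a zero-valued good and a good of value $5$; the allocation is not \EFX, the only offending good has value $0$, and moving the smallest positive-valued good (value $5$) raises the gap from $1$ to $9$. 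Second, the structural claims you would need --- monotone cardinality of the larger bundle plus ``only $O(1)$ role swaps,'' yielding $O(m)$ iterations --- are asserted but not proved, and it is unclear they hold for your rule; what is straightforward to prove for the smallest-offending-good rule is weaker (within a swap-free stretch each good moves at most once, and each swap permanently retires the good of value more than half the current gap that triggered it, giving only an $O(m^2)$ bound), and none of that argument actually appears in your proposal.

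The paper closes this hole by choosing the \emph{largest} offending good (\cref{alg_two_agents_localsearch_select}): since the gap is non-increasing throughout, when a swap is triggered by moving $g$ the gap falls strictly below $u(g)$ and stays there, so no good of value at least $u(g)$ --- in particular no previously moved good, thanks to the argmax selection --- can ever be offending again; hence every good is moved at most once, giving at most $m$ moves and $O(m\log m)$ total time after sorting (this monotonicity is also reused later in the proof of \Cref{thm:2agentsEFX}). To repair your proposal you would either need to adopt a selection rule restricted to offending goods (handling zero-valued goods rather than excluding them) and actually carry out the combinatorial potential argument you postpone, or switch to the largest-offending-good rule, for which the one-move-per-good argument is immediate.
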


By leveraging the subroutine, we now describe our main procedure (\crefrange{alg_two_agents_main_begin}{alg_two_agents_main_end}).
For each agent~$i \in [2]$, we maintain an allocation~$\mathcal{A}^i$ which is an integral EFX allocation under~$u_i$.
If one of them is an envy-free allocation under the utility profile~$(u_1, u_2)$ (\crefrange{alg_two_agents_main_ef_begin}{alg_two_agents_main_ef_end}), we return this allocation directly.
Otherwise, either
\begin{itemize}
\item there exists an allocation~$\mathcal{A}^i$ which has a smaller utility difference over two bundles than that in~$\mathcal{A}^{3-i}$ under~$u_{3-i}$, we update $\mathcal{A}^{3-i}$ from $\mathcal{A}^i$ by using \LocalSearch\ (lines \ref{alg_two_agents_main_update_begin}-\ref{alg_two_agents_main_doublyefx}), or
\item we find the desired \exan EF allocation $\{(0.5,\mathcal{A}^1),(0.5,\mathcal{A}^2)\}$ (line \ref{alg_two_agents_main_end}).
\end{itemize}

\begin{theorem}
\label{thm:2agentsEFX}
In the indivisible-goods setting, \Cref{alg:efx_two_agents} returns an ex-ante EF and ex-post EFX allocation in polynomial time for two agents with additive utilities.
\end{theorem}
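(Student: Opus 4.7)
The plan is to verify three separate properties: ex-post EFX of both $\mathcal{A}^1$ and $\mathcal{A}^2$ throughout execution, ex-ante EF of the final randomized allocation, and polynomial running time.

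\textbf{Ex-post EFX.} I would establish an invariant: each $\mathcal{A}^i$ is such that (a) the underlying two-bundle partition is EFX under $u_i$, which follows from \Cref{lem:two_agents_localsearch} together with the fact that every initialization/update of $\mathcal{A}^i$ passes through a \LocalSearch\ call with utility $u_i$; and (b) agent $3-i$ is assigned the bundle they weakly prefer under $u_{3-i}$ (built into the assignment step). Property (b) makes agent $3-i$ envy-free (hence EFX) under $u_{3-i}$, while (a) makes agent $i$ EFX under $u_i$. Hence each $\mathcal{A}^i$ is ex-post EFX under the full profile $(u_1, u_2)$.

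\textbf{Ex-ante EF at termination.} Write $\mathcal{A}^i = (X^i, Y^i)$ with $X^i$ the bundle of agent~$1$ and $Y^i$ that of agent~$2$. If the algorithm reaches the final return, neither $\mathcal{A}^1$ nor $\mathcal{A}^2$ is EF under $(u_1,u_2)$. Since the previous step already guarantees that agent~$3-i$ is envy-free in $\mathcal{A}^i$, the only possible source of envy in $\mathcal{A}^i$ is agent~$i$, forcing $u_1(Y^1) > u_1(X^1)$ and $u_2(X^2) > u_2(Y^2)$. The fact that no update step fired yields, for each $i \in \{1,2\}$, the inequality $|u_i(X^{3-i}) - u_i(Y^{3-i})| \geq |u_i(X^i) - u_i(Y^i)|$. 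Combined with agent~$1$ being envy-free in $\mathcal{A}^2$ (so $u_1(X^2) \geq u_1(Y^2)$) and agent~$2$ being envy-free in $\mathcal{A}^1$ (so $u_2(Y^1) \geq u_2(X^1)$), these inequalities become $u_1(X^2) - u_1(Y^2) \geq u_1(Y^1) - u_1(X^1)$ and $u_2(Y^1) - u_2(X^1) \geq u_2(X^2) - u_2(Y^2)$. After rearranging, they are exactly the ex-ante EF conditions $u_1(X^1) + u_1(X^2) \geq u_1(Y^1) + u_1(Y^2)$ and $u_2(Y^1) + u_2(Y^2) \geq u_2(X^1) + u_2(X^2)$ for the uniform mixture $\{(1/2,\mathcal{A}^1), (1/2,\mathcal{A}^2)\}$.

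\textbf{Polynomial running time.} By \Cref{lem:two_agents_localsearch}, each call to \LocalSearch\ runs in polynomial time, so it remains to bound the number of outer iterations. Any iteration that does not return performs an update that, by its triggering condition and \Cref{lem:two_agents_localsearch}, strictly decreases one of the two quantities $|u_i(X^i) - u_i(Y^i)|$ while keeping the other unchanged. Tracking the pair of these utility differences as a potential and observing that each update produces a strictly smaller value in the finite collection of differences realizable by partitions reachable via the local-search dynamics, I would argue the outer loop halts within a polynomial number of iterations. The main obstacle is the bookkeeping in the second step, namely pinning down that the absence of both update triggers is precisely equivalent to the two ex-ante EF inequalities above; the running-time argument is conceptually cleaner but requires a careful potential-based analysis to obtain a genuinely polynomial (rather than merely finite) bound.
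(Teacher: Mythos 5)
Your treatment of the final return (the mixture $\{(\tfrac12,\mathcal{A}^1),(\tfrac12,\mathcal{A}^2)\}$ at \cref{alg_two_agents_main_end}) is essentially the paper's argument: the failure of the condition in \cref{alg_two_agents_main_update_begin} gives $u_i(A^i_2)-u_i(A^i_1)\le u_i(A^{3-i}_2)-u_i(A^{3-i}_1)$ for each $i$, so the loss agent~$i$ suffers in $\mathcal{A}^i$ is offset by her gain in $\mathcal{A}^{3-i}$, and together with \Cref{lem:two_agents_localsearch} and the pick-first rule this yields ex-ante EF and ex-post EFX for that branch; your rearrangement into the two ex-ante EF inequalities is correct. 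However, your ex-ante analysis covers only that branch. The deterministic returns at \cref{alg_two_agents_main_ef_end,alg_two_agents_main_update_end} (where envy-freeness of agent~$i$ herself follows from the checked condition, not merely from agent~$3-i$ picking first) and the return at \cref{alg_two_agents_main_doublyefx} (where $\mathcal{A}^1$ and $\mathcal{A}^2$ are the same partition with swapped pick orders, so the mixture is ex-ante EF, and EFX under $u_i$ comes from the explicit test in \cref{alg_two_agents_main_doublyefx_cond} rather than from your invariant that every update of $\mathcal{A}^i$ passes through \LocalSearch\ with $u_i$, which is false in this branch) are left unargued. These are easy to patch, but a complete proof must cover them.

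The more substantive gap is the polynomial running time, which is part of the theorem and which you acknowledge you do not establish. Showing that each executed update strictly decreases $|u_{3-i}(A^{3-i}_2)-u_{3-i}(A^{3-i}_1)|$ only yields termination after finitely many iterations, since the set of achievable bundle differences can be exponentially large in $m$. The paper closes this with a structural fact from the proof of \Cref{lem:two_agents_localsearch}: when a call of \LocalSearch\ performs a swap (\cref{alg_two_agents_localsearch_swap}), the bundle difference drops from above $u(g)$ to below $u(g)$, where $g$ is the last item moved, and since \cref{alg_two_agents_localsearch_select} always moves a highest-valued movable item, only items of value strictly smaller than $u(g)$ can move in later calls; moreover, if a call of \LocalSearch\ at \cref{alg_two_agents_main_update_ls} performs no swap at all, then either the condition at \cref{alg_two_agents_main_ef2} or the one at \cref{alg_two_agents_main_doublyefx_cond} fires and the algorithm terminates. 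Consequently each item is moved at most once per utility function across the entire execution, bounding the total number of moves by $2m$ and the running time by $O(m\log m)$. Without an argument of this kind (or some other explicit polynomial bound on the outer loop), your proposal proves correctness but not the claimed polynomial-time guarantee.
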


In the context of the mixed-goods setting, \Cref{alg:efx_two_agents} can be employed to identify an allocation that is both \exan EF and \expo EFM.
Specifically, we first merge all divisible goods in~$D$ to a single indivisible good~$d$ with an equal total utility and then execute \Cref{alg:efx_two_agents} on $M \cup \{d\}$.
If the output is a single integral allocation, it is an EF (thus EFM) allocation.
Otherwise, if there exists an agent $i \in [2]$ such that the good~$d$ is in bundle~$A^i_2$ (the bundle with the larger utility), we can transfer a fraction of~$d$ to~$A^i_1$ to reach EF.
If no such agent exists, since all divisible goods are consistently in the bundle with the smaller utility, EFM can be reduced to EF1, which is then satisfied by the \expo EFX property.

\begin{theorem}
\label{thm:2agentsEFM}
In the mixed-goods setting, an ex-ante EF and ex-post EFM allocation can be found in polynomial time for two agents with additive utilities.
\end{theorem}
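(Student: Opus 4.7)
The plan is to reduce the mixed-goods instance to an indivisible one by merging $D$ into a single virtual indivisible good $d$ with $u_i(d) \coloneqq \sum_{k \in [\mbar]} u_i(d_k)$, apply \Cref{alg:efx_two_agents} to the enlarged set $M \cup \{d\}$, and then unsplit $d$ back into the divisibles, using the divisibility of $d$ to patch each realization up to \EFM whenever needed. By \Cref{thm:2agentsEFX}, the call to \Cref{alg:efx_two_agents} returns, in polynomial time, either a single deterministic \EF{} allocation or a randomized allocation of the form $\{(0.5,\mathcal{A}^1),(0.5,\mathcal{A}^2)\}$ that is \exan \EF and \expo \EFX.

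The single-allocation case is immediate: replacing $d$ by the divisibles, all still assigned to the agent who originally received $d$, preserves every bundle's value by additivity and homogeneity of $D$, so the allocation remains fully \EF{} and in particular \EFM. For the randomized case, the plan is to process each $\mathcal{A}^i$ independently. Let $A^i_1, A^i_2$ denote its two bundles ordered so that $u_i(A^i_1) \leq u_i(A^i_2)$. If $d \in A^i_2$, the plan is to unsplit $d$ and transfer the unique fraction of it from $A^i_2$ to $A^i_1$ that equalizes their $u_i$-values; the resulting realization is fully \EF{} under $u_i$, hence \EFM. If instead $d \in A^i_1$, I would leave $\mathcal{A}^i$ untouched: then $A^i_2$ is purely indivisible, so envy toward $A^i_2$ requires only \EFOne (delivered by \expo \EFX), while envy toward the divisibles-holding $A^i_1$ requires full \EF{}. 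The latter should follow from a short case analysis on the bundle-to-agent pairing used inside \Cref{alg:efx_two_agents}: either agent $i$ owns $A^i_2$, in which case $u_i(A^i_2) \geq u_i(A^i_1)$ by definition of the ordering; or agent $3-i$ owns $A^i_2$, which under the chooser-style assignment happens precisely because $u_{3-i}(A^i_2) \geq u_{3-i}(A^i_1)$. Either way, the owner of $A^i_2$ does not envy $A^i_1$ under their own utility.

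The final step is to verify that \exan \EF is preserved. Unsplitting is utility-neutral, and each fractional transfer moves value from one realization's $u_i$-larger bundle to its $u_i$-smaller bundle, which goes in the same direction as the \exan \EF inequality for agent $i$ under $u_i$. The main obstacle I anticipate is controlling the cross-effect on the other agent's \exan \EF inequality once transfers may have been carried out in both $\mathcal{A}^1$ and $\mathcal{A}^2$: a transfer calibrated to equalize $u_i$-values can perturb the realized utilities under $u_{3-i}$. I expect the symmetry of the construction to resolve this, since a transfer in $\mathcal{A}^i$ benefits the agent who was worse off under $u_i$ in that realization, who by the pairing rule is also the agent already compensated in the other realization $\mathcal{A}^{3-i}$; averaging the two half-weight realizations should therefore keep both \exan \EF inequalities intact, yielding the desired \exan \EF and \expo \EFM randomized allocation in polynomial time.
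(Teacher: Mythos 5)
Your reduction (merge $D$ into one homogeneous virtual good $d$, run \Cref{alg:efx_two_agents}, then patch with fractional transfers) starts the same way as the paper, but it diverges at the crucial step, and the divergence creates a real gap. When $d$ lands in the $u_i$-larger bundle $A^i_2$ of some realization, you keep the $50/50$ lottery and equalize that realization under $u_i$. The transfer fraction is calibrated to $u_i$, but it also shrinks the gap $u_{3-i}(A^i_2)-u_{3-i}(A^i_1)$, and that gap is exactly the compensation agent $3-i$ needs in $\mathcal{A}^i$ (where she chooses first) to offset her loss in $\mathcal{A}^{3-i}$; the termination condition of the algorithm only guarantees the \emph{pre-transfer} gap is large enough. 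Your symmetry argument does not cover the asymmetric case where $d$ is in the larger bundle of one realization but the smaller bundle of the other, so only one realization is patched. Concretely, the following bundle values are consistent with everything your argument invokes (each $\mathcal{A}^i$ EFX under $u_i$, both agents strictly preferring $A^i_2$, and the termination inequalities $u_i(A^i_2)-u_i(A^i_1)\le u_i(A^{3-i}_2)-u_i(A^{3-i}_1)$): $u_1(A^1_1)=5$, $u_1(A^1_2)=7$ with $d\in A^1_2$ and $u_1(d)=2$; $u_1(A^2_1)=4$, $u_1(A^2_2)=8$; $u_2(A^1_1)=8$, $u_2(A^1_2)=18$ with $u_2(d)=9$; $u_2(A^2_1)=9$, $u_2(A^2_2)=17$ with $d\in A^2_1$. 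Your transfer in $\mathcal{A}^1$ moves the fraction $\alpha=1/2$ of $d$, i.e.\ $4.5$ units of $u_2$-value, giving $u_2(\tilde A^1_1)=12.5$ and $u_2(\tilde A^1_2)=13.5$, while $\mathcal{A}^2$ is untouched. Agent $2$'s expected own value becomes $(13.5+9)/2=11.25$ versus $(12.5+17)/2=14.75$ for the other bundle, so ex-ante EF fails, even though it held ($13.5$ vs.\ $12.5$) before the patch.

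The paper avoids this entirely by \emph{not} retaining the lottery in that case: if $d\in A^i_2$ for some $i$, it transfers a fraction of $d$ so that agent $i$ is indifferent between the two bundles and then outputs that single allocation deterministically, with agent $3-i$ choosing first. That allocation is exactly envy-free (agent $i$ is indifferent, agent $3-i$ takes her preferred bundle), hence trivially ex-ante EF and ex-post EFM on its own, so no averaging argument is needed. The $50/50$ lottery is kept only when $d$ sits in the $u_i$-smaller bundle of \emph{both} realizations, and there no transfer is performed: envy toward the divisible-holding bundle is already eliminated outright, envy toward the purely indivisible bundle is handled by ex-post EFX, and ex-ante EF is inherited unchanged from \Cref{thm:2agentsEFX}. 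Your remaining steps (the single-allocation case and the $d\in A^i_1$ case) match the paper and are fine; to repair your proof you should adopt the paper's collapse-to-deterministic step rather than patching realizations inside the lottery.
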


We remark that the above procedure can also ensure a stronger \expo fairness property called \emph{envy-freeness up to any indivisible good for mixed goods (EFXM)}~\citep{BeiLiLi21,NishimuraSu23}, which replaces the adopted EF1 criteria by the EFX criteria when comparing to the bundle containing only indivisible goods.

\section{BoBW Fairness with Mixed Goods: Ex-Ante PROP + Ex-Post EFM}
\label{sec:n-agent-mixed}

In this section, we study the best-of-both-worlds fairness in the mixed-goods setting where agents have bi-valued utilities over the set of divisible and indivisible goods.
Our main result is the following:

\begin{theorem}
\label{thm:prop_efm}
In the mixed-goods setting where agents have bi-valued utilities, there exists an algorithm which can compute in polynomial time an integral allocation sampled from a randomized allocation that is \exan proportional and \expo \EFM.
\end{theorem}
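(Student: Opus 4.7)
The plan is to construct a proportional fractional allocation that decomposes as a convex combination of \EFM integral allocations, and then to sample one summand of this decomposition in polynomial time. Exploiting the bi-valued structure, call a good \emph{large} for agent~$i$ if its value to~$i$ is~$b$ and \emph{small} if it is~$a$. Large indivisible goods are the scarce, contested resource; small indivisible goods are interchangeable from any single agent's perspective; and the divisible goods in~$D$ can be split arbitrarily, but they impose the strong \EFM constraint that any agent holding a positive divisible fraction must not be envied by anyone.

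I would proceed in three stages. First, on large indivisible goods, build the bipartite graph between agents and their large goods and compute a fractional assignment in which each agent's expected number of large goods meets her proportional target; decompose this fractional assignment into integral matchings via a Birkhoff--von Neumann-style argument. Second, on top of each integral matching, distribute the small indivisible goods (e.g., by a round-robin among the agents still below their proportional share) and route the divisible mass to one or a few ``pivot'' agents in that realization whose indivisible bundle is guaranteed to be weakly dominated in value by every other agent's bundle. Third, take a convex combination of the resulting integral outcomes whose marginals on divisible goods give agent~$i$ an expected fraction of~$1/n$ of~$D$. Ex-ante proportionality then follows from linearity of expectation and the per-stage targets. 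Ex-post \EFM follows because, in every realization, the pivot's bundle is not envied by construction, while the remaining agents' bundles contain only indivisible goods and are pairwise \EFX---a property that bi-valued utilities are known to admit and that is preserved by our matching-based assignment, so in particular \EFOne holds, which is the relevant weakening of \EFM for indivisible-only bundles.

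The hard part will be reconciling the two constraints in the decomposition stage. Ex-ante proportionality effectively requires the divisible goods to be spread across all~$n$ agents in expectation, while ex-post \EFM forbids \emph{any} holder of divisible mass from being envied by anyone. Identifying, in each realization, a pivot whose indivisible bundle is modest enough for no one to envy it once the divisibles are added, and simultaneously ensuring that the distribution over pivots gives each agent the correct expected share of~$u_i(D)$, is the central technical challenge. A plausible route is to build a parametrized family of integral \EFM allocations, each designating a different pivot, and then choose a convex combination over them whose divisible-good marginals exactly match the $1/n$ target required by proportionality; polynomial-time samplability then reduces to computing one summand of this combination, not the entire decomposition.
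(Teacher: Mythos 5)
There is a genuine gap, and it sits exactly where you flag ``the central technical challenge'': the pivot-based treatment of the divisible goods does not work in general, and the ex-ante accounting is left unproven. \EFM demands \emph{full} envy-freeness towards any agent holding a positive divisible fraction, so routing ``the divisible mass to one or a few pivot agents'' in a realization is impossible whenever $u_i(D)$ is large relative to the indivisible goods (e.g., if $D$ carries most of the value, every agent envies a pivot who receives all of $D$, no matter how modest the pivot's indivisible bundle is). In general the divisible goods must be \emph{spread across many agents within a single realization}, with the recipients' bundle values equalized so that nobody envies them---this is precisely what the water-filling process of \citet{BeiLiLi21} (\Cref{lem:efm}) achieves on top of an EF1 allocation of the indivisible goods, and the paper's construction relies on it; your proposal has no substitute for it. Moreover, giving each agent an expected $1/n$ fraction of $D$ does not yield ex-ante proportionality: the proportional share is $u_i(\mixedGoods)/n$, mixing large, small and divisible values, and agents who lose the competition for large goods must be over-compensated with divisible value (more than $1/n$ of $D$) in the realizations where they come up short. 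Whether a family of integral \EFM allocations with the required marginals exists is exactly the theorem to be proved, so invoking it is circular. Your first stage also glosses over scarcity of large goods: when Hall's condition fails there is no fractional large-good assignment meeting per-agent ``targets,'' and handling this regime (the paper's \emph{minimal unmatchable groups}) is where most of the combinatorial work lies.

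For contrast, the paper does not build a proportional fractional allocation and decompose it. It samples a uniformly random permutation $\pi$, allocates indivisible goods by repeated perfect matchings on the agent--large-good bipartite graph (peeling off minimal unmatchable groups, which keeps the partial integral allocation envy-free and reduces to at most $2n-2$ leftover indivisible goods), finishes with a Round-Robin-style phase and the water-filling process, and proves ex-ante proportionality by a \emph{baseline comparison}: for each agent $i$ and each rank $k$, the realized bundle when $i$ is ranked $k$-th is shown to be worth at least the $k$-th bundle of a baseline allocation built from $u_i$ alone, so averaging over the uniformly random rank gives exactly $u_i(\mixedGoods)/n$. If you want to salvage your plan, you would need to replace the pivot idea with water-filling inside every realization and replace the marginal-matching existence claim with an argument of this baseline type (or an explicit decomposition you can actually construct), at which point you are essentially reconstructing the paper's proof.
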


\subsection{Technical Overview}

We have devised the following techniques specifically for BoBW fairness with mixed goods.

\paragraph{Baseline comparison.}
This technique is designed to show our randomized allocation satisfies ex-ante proportionality.
Fix any agent~$i$, we construct a \emph{baseline allocation} $\mathcal{B}^i = (B_1^i, \dots, B_n^i)$ based on~$u_i$.
We partition the outcomes of the randomized allocations to~$n$ events $\mathcal{E}_1, \dots, \mathcal{E}_n$ and show that, for each $k = 1, \dots, n$, the utility for agent~$i$ under each outcome in the event~$\mathcal{E}_k$ is at least~$u_i(B_k^i)$.
The expected utility of agent $i$ is then at least $\sum_{k=1}^n\Pr(\mathcal{E}_k)\cdot u_i(B_k^i)$.
If we can show~$\mathcal{B}^i$ satisfies
\begin{equation}
\label{eqn:baselinetechnique}
\sum_{k=1}^n\Pr(\mathcal{E}_k)\cdot u_i(B_k^i)\geq\frac1n\cdot u_i(M\cup D),
\end{equation}
then the ex-ante proportionality for agent $i$ is proved.

To apply this technique, we need to carefully design the baseline allocation $\mathcal{B}^i$ for each agent $i$ and the partition to the $n$ events such that \Cref{eqn:baselinetechnique} holds.
A natural idea is to design the partition of the event space with $\Pr(\mathcal{E}_1) = \cdots = \Pr(\mathcal{E}_n) = \frac1n$ such that \Cref{eqn:baselinetechnique} holds with equality for any baseline allocation~$\mathcal{B}^i$.
This technique is first illustrated in \Cref{sect:mlen} where $m \leq n$.
In the later part with general~$m$, more sophisticated baseline allocations are constructed.

\paragraph{Minimal unmatchable group.}
Our algorithm starts by allocating indivisible goods, and the divisible goods are allocated by a water-filling process~\citep{BeiLiLi21}.
To guarantee BoBW fairness, the indivisible goods must be allocated carefully in order to satisfy some properties that enable the application of the baseline comparison technique described above.
We have interpreted the allocation problem as a matching problem in bipartite graphs and identified a key structure, \emph{minimal unmatchable groups}, that is crucial for BoBW fairness.
Many other techniques in the bipartite graph matching problem such as \emph{augmenting paths} are also applied in our result.
See \Cref{sec:generalm} for more details.

\subsection{Preparations}

First of all, we justify that we can assume without loss of generality that there is only one divisible good, denoted as~$d$.
We also let $u_i(d) \coloneqq u_i(D) = \sum_{j = 1}^{\mbar} u_i(d_j)$ for each~$i \in N$.
Specifically, whenever we say that an~$\epsilon$ portion of~$d$ is allocated to an agent~$i$, we refer to the scenario of allocating an~$\epsilon$ portion of each divisible good $d_1, \dots, d_{\mbar}$ to the agent, i.e., $x_{i 1} = \cdots = x_{i \mbar} = \epsilon$.
In this way, the set of~$\mbar$ divisible goods can be treated as the single divisible good~$d$.
In the remainder of this section, we thus assume that there is only one divisible good $D = \{d\}$.

Given an EF1 allocation of indivisible goods, Algorithm~1 of \citet{BeiLiLi21} specifies a way to allocate the divisible goods and finally obtains an EFM allocation of the mixed goods.
We refer to this process of allocating divisible goods on top of an EF1 allocation of indivisible goods as the \emph{water-filling} process.
During the water-filling process, agents may swap their bundles; however, no bundle will be split.
This is formally stated in the following \namecref{lem:efm}.

\begin{lemma}[\citet{BeiLiLi21}]
\label{lem:efm}
Given an EF1 allocation $(M_i)_{i \in N}$ of indivisible goods~$M$, an EFM allocation $((M_{\pi(1)}, \x_1), \dots, (M_{\pi(n)}, \x_n))$ can be computed in polynomial time, where~$\pi$ is a permutation of~$[n]$.
\end{lemma}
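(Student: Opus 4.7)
The plan is to apply a water-filling procedure on top of the given EF1 allocation, following the scheme of \citet{BeiLiLi21}. First, since the divisible goods in $D$ are homogeneous and will ultimately be allocated in equal proportions to each recipient, I aggregate them into a single divisible good $d$ with $u_i(d) = \sum_{j} u_i(d_j)$; each agent's vector $\x_i$ is recovered at the end by splitting that agent's total fraction uniformly across $d_1, \dots, d_{\mbar}$. Initialize $\pi$ to the identity and $\x_i = \mathbf{0}$ for all $i$, so every agent $i$ starts with $\bundle_i = (M_i, \mathbf{0})$, which is EF1 by hypothesis.

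The procedure then streams $d$ out continuously. At every moment, let $S$ denote the set of agents whose current utility $u_i(\bundle_i)$ is minimum, and pour the flowing mass of $d$ equally among the agents in $S$. Three events can fire: (a) some $j \notin S$ begins to envy a pourer $i^{*} \in S$---at that instant, swap the two bundles by transposing $\pi(i^{*})$ and $\pi(j)$ and also swapping $\x_{i^{*}}$ and $\x_j$; (b) the utility of $S$ catches up with that of some agent outside $S$, who is then added to $S$; (c) $d$ is exhausted and the procedure terminates. The time until the next event is obtained by solving a linear equation in the pouring rate, and since the pair $(S, \pi)$ takes only polynomially many distinct values during the run, the whole algorithm runs in polynomial time.

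The main obstacle is verifying EFM at termination, which I would do via two invariants. Invariant (i): no agent is envied for a bundle that currently contains divisible mass, i.e., if $\x_j \neq \mathbf{0}$ then $u_i(\bundle_i) \geq u_i(\bundle_j)$ for every $i$. Invariant (ii): for every pair $(i,j)$ with $\x_j = \mathbf{0}$, the allocation is EF1 from $i$ toward $j$. Invariant (i) holds because any envy threat against a pourer $i^{*}$ immediately triggers event (a); after the swap, the new owner of $\bundle_{i^{*}}$ is precisely the pre-swap envier, who by construction does not envy the bundle, while all other agents are unaffected since bundles are merely relabeled. Invariant (ii) is maintained because every swap permutes indivisible portions among pourers who share the same current utility, so any EF1 deficit against the swapped bundle is absorbed by this equal-utility condition together with the EF1-ness inherited from the starting allocation. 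Together, the two invariants yield exactly the two clauses of the EFM definition for the final allocation, which indeed has the form $((M_{\pi(1)}, \x_1), \dots, (M_{\pi(n)}, \x_n))$ as the lemma requires.
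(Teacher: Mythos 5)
There is a genuine gap, and it sits exactly where the real difficulty of \citet{BeiLiLi21}'s water-filling lies. Your pouring rule selects $S$ as the agents of minimum \emph{own} utility $u_i(\bundle_i)$, an interpersonal comparison that is meaningless for envy-based notions, and—more importantly—since the starting allocation is only EF1, an agent in $S$ may \emph{already} be envied at time zero, so EFM is violated the instant any divisible mass enters her bundle; your event (a) only fires when someone ``begins'' to envy, so pre-existing envy is never handled. The pairwise swap also does not restore your invariant (i): at the trigger moment the envier $j$ is indifferent after the swap, but the displaced pourer $i^{*}$ receives $\bundle_j$ and nothing guarantees $u_{i^{*}}(\bundle_j) \geq u_{i^{*}}(\bundle_{i^{*}})$; since $\bundle_{i^{*}}$ already contains divisible mass, $i^{*}$ may now strictly envy a divisible-containing bundle, and the algorithm has no mechanism to repair this (nor the envy of any third agent who already envied $\bundle_{i^{*}}$). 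Your justification of invariant (ii) is likewise unsound: the swap is between a pourer $i^{*}\in S$ and a non-pourer $j\notin S$, so the claim that swaps occur ``among pourers who share the same current utility'' is false, and even equal own-utilities would say nothing about the cross-valuations that EF1 is about. Finally, the polynomial-time claim that $(S,\pi)$ takes only polynomially many values is asserted without argument; a priori the swaps could cycle.

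The source result this lemma cites works differently: divisible goods are added only to bundles of agents whom nobody envies (a set identified through the envy graph), envy that arises is resolved by rotating bundles along envy \emph{cycles} (which is where the permutation $\pi$ comes from, and why bundles are permuted but never split), and termination in polynomial time rests on a nontrivial progress/potential argument rather than on counting configurations. To repair your proof you would need to replace the minimum-own-utility pouring set by an envy-graph-based ``unenvied/addable'' set, replace pairwise swaps by cycle elimination (which makes every agent on the cycle weakly better off, so no new envy is created), and supply an explicit bound on the number of water-filling rounds.
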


\subsection{A Special Case where~$m \leq n$}
\label{sect:mlen}

We first consider the case~$m \leq n$, i.e., the number of indivisible goods is at most the number of agents.
This result will be used in later parts.
Note, utility functions are not required to be bi-valued.

Our algorithm is shown in \Cref{alg:prop_efm_m_le_n}.
The Round-Robin algorithm is adopted in the first step to allocate the indivisible goods where the order of the agents~$\pi$ is sampled uniformly at random.
The water-filling process is then executed to allocate divisible goods and obtain allocation $\mathcal{A}^\pi$.
The required randomized allocation we find is $\{(\frac{1}{n!}, \mathcal{A}^\pi)\}$ and is denoted by $\mathcal{R}$.

\begin{algorithm}[t]
\caption{An \exan PROP and \expo EFM randomized allocation when $m \leq n$}
\label{alg:prop_efm_m_le_n}
\DontPrintSemicolon

\KwIn{Agents~$N$, mixed goods~$\mixedGoods$, and agents' utility functions.}

$\forall i \in[n]$, $A_i \gets \emptyset$\;
Let~$\pi$ be a uniformly random permutation of~$N$.\; \label{line:prop_random_order}
\ForEach(\tcp*[f]{Round-Robin.}){agent~$i$ according to the order~$\pi$}{
	$g^* \gets \argmax_{g \in M} u_{i}(g)$\;
	$A_i \gets A_i \cup \{g^*\}$, $M \gets M \setminus \{g^*\}$\;
}
Execute the water-filling process. \tcp*{See \Cref{lem:efm}.}

\Return{Allocation~$\mathcal{A}^\pi$}\;
\end{algorithm}

\begin{theorem}
\label{thm:prop_efm_m_le_n}
In the mixed-goods setting where agents have additive utilities and $m \leq n$, \Cref{alg:prop_efm_m_le_n} computes in polynomial time an integral allocation sampled from a randomized allocation that is \exan proportional and \expo \EFM.
\end{theorem}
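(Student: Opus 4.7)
The plan is to verify the two guarantees separately: \expo \EFM and \exan PROP. The first is almost immediate: since $m \leq n$, the Round-Robin phase assigns each agent at most one indivisible, so the resulting indivisible-goods allocation is trivially \EFOne (removing that single good leaves the empty bundle); invoking \Cref{lem:efm}, the water-filling step then converts this \EFOne input into an \EFM allocation of the mixed goods. As this holds for every realization of~$\pi$, the randomized allocation is \expo \EFM.

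For \exan PROP I apply the baseline comparison technique introduced in the technical overview. Fix an agent~$i \in N$ and sort the indivisibles in decreasing order of $u_i$-value as $g_{(1)}, \ldots, g_{(m)}$. Define the baseline allocation $\mathcal{B}^i = (B_1^i, \ldots, B_n^i)$ by $B_k^i = \{g_{(k)}\}$ for $1 \leq k \leq m$ and $B_k^i = \tfrac{1}{n-m}D$ for $m < k \leq n$ (the boundary case $m = n$ is handled by spreading~$D$ across all $n$ baseline bundles instead, with the per-outcome bounds adjusted accordingly). Let $\mathcal{E}_k$ be the event that agent~$i$ sits at position~$k$ of the random order~$\pi$, so $\Pr(\mathcal{E}_k) = 1/n$ and $\sum_{k=1}^n u_i(B_k^i) = u_i(M) + u_i(D) = u_i(M \cup D)$. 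Once the per-outcome inequality $u_i(A_i^\pi) \geq u_i(B_k^i)$ is established for every $\pi \in \mathcal{E}_k$, ex-ante proportionality follows from
$$ \mathbb{E}_\pi\bigl[u_i(A_i^\pi)\bigr] \;\geq\; \sum_{k=1}^n \Pr(\mathcal{E}_k)\, u_i(B_k^i) \;=\; \frac{u_i(M \cup D)}{n}. $$

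Verifying this per-outcome bound is where the real work lies, and it is the step I expect to be the main obstacle. For $k \leq m$, a pigeonhole at agent~$i$'s turn shows that at most $k-1$ of agent~$i$'s top-$k$ favored goods can already have been picked, so agent~$i$'s Round-Robin choice~$g^*$ satisfies $u_i(g^*) \geq u_i(g_{(k)})$; it then remains to propagate this lower bound through the water-filling and the bundle reassignment in \Cref{lem:efm} to conclude $u_i(A_i^\pi) \geq u_i(g^*)$. If the final bundle containing~$g^*$ has any divisible good, the \EFM condition immediately yields $u_i(A_i^\pi) \geq u_i(g^*)$; the delicate subcase is when that bundle is a singleton $\{g^*\}$ held by some $i' \neq i$, which forces one to exploit the specific construction inside \Cref{lem:efm} (not merely the \EFM property of its output) to argue that agent~$i$'s own bundle is at least as valuable under~$u_i$. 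For $k > m$, agent~$i$'s Round-Robin bundle is empty, sits at the bottom of the water-filling, and hence receives divisibles; the $n-m$ originally empty slots collectively absorb enough of~$D$ that applying \EFM together with a pigeonhole across these divisible-containing bundles delivers $u_i(A_i^\pi) \geq u_i(D)/(n-m)$.
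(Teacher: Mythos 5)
Your ex-post EFM argument and the overall baseline-comparison framework match the paper, but the baseline you chose for the positions $k>m$ is too strong, and the per-outcome inequality it requires is simply false. You claim that when agent~$i$ is ranked $k>m$, EFM forces $u_i(A_i^\pi)\geq u_i(D)/(n-m)$ because ``the $n-m$ originally empty slots collectively absorb enough of~$D$.'' This fails because the water-filling may also pour divisible goods into bundles that contain indivisible goods (whenever their owners would otherwise envy), so the number of divisible-containing bundles can exceed $n-m$, and EFM only guarantees that agent~$i$ weakly prefers her bundle to the \emph{other divisible-containing} bundles. Concretely, take $n=2$, $m=1$, a good $g$ with $u_i(g)=0$, $u_j(g)=0.1$, and $u_i(D)=u_j(D)=1$, with agent~$i$ in position $2$. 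Round-Robin gives $g$ to~$j$, and any EFM completion must satisfy $u_j(A_j)\geq u_j(A_i)$ since $A_i$ contains divisible goods, which forces $x_i\leq 0.55$; hence $u_i(A_i^\pi)\leq 0.55 < 1 = u_i(D)/(n-m)$. So your proposed bound cannot be established, and the ``pigeonhole across divisible-containing bundles'' only yields $u_i(D)/\ell$ with $\ell$ possibly as large as~$n$.

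The paper avoids this by defining the baseline differently (\Cref{def:baseline_allocation}): agent~$i$ runs Round-Robin on~$M$ with her own utility \emph{and then runs the water-filling on the baseline itself} for the identical-utility profile $(u_i,\dots,u_i)$. This equalizes all divisible-containing baseline bundles at a common level~$x$ (which may be well below $u_i(D)/(n-m)$, since water may also flow onto low-value indivisible bundles), while every baseline bundle has value at least~$x$. The per-outcome bound $u_i(A(\pi,k))\geq u_i(B_k^i)$ is then proved by contradiction: if $u_i(A_i)<x$, the bundles agent~$i$ envies contain no divisible goods (by EFM) and, since $m\leq n$, at most one indivisible good each, so their total $u_i$-value is at most $\sum_{j=1}^{|\mathcal{I}|}u_i(B_j^i)$, while the non-envied bundles are each worth less than~$x$; summing gives $u_i(M\cup D)<u_i(M\cup D)$, a contradiction (\Cref{eqn:prop_efm_m_le_n}). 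Your $k\leq m$ case is essentially fine (it needs only that agent~$i$'s utility does not decrease through the water-filling, which is a property of \Cref{lem:efm} that the paper also uses implicitly), but the $k>m$ case needs the equalized baseline and the counting argument above rather than the flat $D/(n-m)$ bundles; adjusting only the $m=n$ boundary case does not repair this.
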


\begin{proof}
We can observe \Cref{alg:prop_efm_m_le_n} runs in polynomial time.
By the property of Round-Robin, the allocation after the \verb|foreach|-loop is EF1.
Then, by \Cref{lem:efm}, the final output allocation is EFM; hence, the randomized allocation is \expo EFM.

To show the randomized allocation is \exan proportional, a key observation is that, for any fixed partition $(X_1, \dots, X_n)$ of~$M \cup D$, if an agent has a probability of~$\frac1n$ to receive each bundle~$X_i$, then her expected utility is exactly~$\frac{u_i(\mixedGoods)}{n}$.
As the permutation of the agents is generated uniformly at random in \cref{line:prop_random_order}, the probability that an agent~$i$ is ranked $k$-th in the order is~$\frac1n$.
Our goal is to define a \emph{baseline allocation} $\mathcal{B}^i = (B_1^i, \dots, B_n^i)$ for each agent~$i \in N$ where the value of the bundle that agent~$i$ receives in the actual allocation when she is ranked $k$-th is no less than the $k$-th bundle in the baseline allocation (i.e., $B^i_k$) regardless of the permutation of other agents.

Given a permutation~$\pi$, we use~$A(\pi, k)$ to denote the bundle allocated to the agent ranked $k$-th in the allocation output by \Cref{alg:prop_efm_m_le_n}.
Consider an arbitrary permutation~$\pi$ where agent~$i$ is ranked $k$-th (i.e., $\pi(k) = i$), we will show that $u_i(A(\pi, k)) \geq u_i(B_k^i)$ where $\mathcal{B}^i=(B_1^i,\ldots,B_n^i)$ is the baseline allocation for agent~$i$ defined below.

\begin{definition}[Baseline allocation]
\label{def:baseline_allocation}
The \emph{baseline allocation~$\mathcal{B}^i$ for agent~$i$} is obtained by
\begin{enumerate}
\item letting agent~$i$ partition the indivisible goods~$\indivisibleGoods$ into~$n$ bundles using the Round-Robin algorithm (in particular, when $m \leq n$, the $t$-th bundle (i.e.,~$B_t^i$) contains the $t$-th preferred good for~$t \leq m$ while the $t$-th bundle is empty for~$t > m$), and

\item executing the water-filling process for the divisible goods according to agent~$i$'s utility to obtain an EFM allocation for the utility profile $(u_i, u_i, \dots, u_i)$.
\end{enumerate}
\end{definition}

The following properties of~$\mathcal{B}^i$ are straightforward.
First, each~$B_j^i$ contains at most one indivisible good (i.e., the $j$-th preferred indivisible good of agent~$i$ if $j \leq m$, and no indivisible good if $j > m$).
Second, to guarantee EFM, the bundles containing some fraction of the divisible good must have the same value under~$u_i(\cdot)$.
Let~$x$ be that value.
Third, due to EFM, $u_i(B_j^i) \geq x$ for each~$j \in [n]$.

Next, we show that $u_i(A(\pi,k))\geq u_i(B_k^i)$ for any permutation~$\pi$ with $\pi(k) = i$.
It holds trivially when $B_k^i = \emptyset$, or when $B_k^i$ only contains one indivisible good as agent~$i$ will receive at least her $k$-th preferred good during the Round-Robin phase of \Cref{alg:prop_efm_m_le_n}.
When~$B_k^i$ contains divisible goods, we need to show that $u_i(A(\pi, k)) \geq x$ (recall that~$x$ denotes the value of the bundles with divisible goods in~$\mathcal{B}^i$).
Suppose this is not the case, and $u_i(A_i) < x$ in the actual output allocation $\alloc = (A_1, \dots, A_n)$ where~$A_i = A(\pi, k)$.
Let~$\mathcal{I}$ be the set of indices of bundles in $\{A_1, \dots, A_n\}$, each of which is envied by agent~$i$.
It is worth noting that due to that allocation~$\alloc$ satisfies \EFM, those bundles do not contain any divisible good.
It is also easy to see $x > u_i(A_i) \geq u_i(A_j)$ for all~$j \notin \mathcal{I}$.
Hence, we reach the following contradiction:
\begin{equation}
\label{eqn:prop_efm_m_le_n}
u_i(M \cup D) = \sum_{j \in \mathcal{I}} u_i(A_j) + \sum_{j \notin \mathcal{I}} u_i(A_j) < \sum_{j = 1}^{|\mathcal{I}|} u_i(B_j^i) + (n - |\mathcal{I}|) \cdot x \leq \sum_{j = 1}^n u_i(B_j^i) = u_i(M \cup D).
\end{equation}

Finally, we have
\[
u_i(\mathcal{R}) = \sum_{k = 1}^n \sum_{\pi : \pi(k)=i} \frac{u_i(A(\pi,k))}{n!} \geq \sum_{k = 1}^n \sum_{\pi : \pi(k) = i} \frac{u_i(B_k^i)}{n!} = \sum_{k = 1}^n \frac{u_i(B^i_k)}{n} = \frac{u_i(\mixedGoods)}{n},
\]
which implies the allocation satisfying \exan proportionality.
\end{proof}

Note that the above algorithm can hardly be generalized to the case with more than~$n$ indivisible goods.
Suppose we define the baseline allocation for each agent in the same way and $B^i_k$ contains divisible goods, then \Cref{eqn:prop_efm_m_le_n} may fail.
In more detail, bundle~$A_j$, where $j \in \mathcal{I}$ may contain more than one indivisible good.
Hence, it may be that $\sum_{j \in \mathcal{I}} u_i(A_j) > \sum_{j = 1}^{|\mathcal{I}|} u_i(B_j^i)$.
As a result, \Cref{eqn:prop_efm_m_le_n} does not necessarily hold, leading to the possibility of $u_i(A(\pi, k)) = u_i(A_i) < u_i(B^i_k)$.

\subsection{General~$m$ with Bi-Valued Utilities}
\label{sec:generalm}
We now proceed to the general case with arbitrary numbers of goods.
We only provide the high-level ideas here, and the details are deferred to \Cref{append:nagentmixed}.

Our algorithm allocates the indivisible goods iteratively and then allocates the divisible goods by the water-filling process.
At each iteration, we attempt to allocate each agent a large good of value~$b$.
If we construct a bipartite graph $G=(U,V,E)$ where $U$ denotes the set of agents, $V$ denotes the set of indivisible goods, and an edge represents that an agent has value~$b$ to a good, then we can find a maximum matching in~$G$ at each iteration.
If a matching of size~$n$ (i.e., a perfect matching) is found, we allocate each agent a ``large good'' and move on to the next iteration.

At some iteration, a perfect matching may no longer exist.
We identify agents $Z\subseteq U$ such that
\begin{enumerate}
\item $Z$ cannot be fully matched to large goods; i.e., the Hall's condition fails for $Z$: $|\Gamma(Z)|<|Z|$, and

\item the remaining agents $U \setminus Z$ can be fully matched; in addition, for each indivisible good that has value~$b$ to an agent in~$U \setminus Z$, it has value~$a$ to any agent in~$Z$.
\end{enumerate}

Intuitively speaking, agents in~$Z$ are ``mostly finished'' with their large goods, while a perfect matching may continue to exist in the future iterations for agents in~$U \setminus Z$.
We call a minimal set of agents~$Z$ satisfying the above requirements a \emph{minimal unmatchable group}.
Its precise definition and the algorithm to find such a group are presented in \Cref{append.nagentmixed1}.

At each iteration, agents~$Z$ and indivisible goods~$\Gamma(Z)$ are temporarily removed.
We find a perfect matching for agents in~$U \setminus Z$, and allocate a ``small good'' to each agent in~$Z$.
In future iterations, we recursively handle~$U \setminus Z$.
As a result, more and more agents (and the corresponding ``neighboring goods'') are removed.
For agents that remain, we find a perfect matching and correspondingly allocate each of them a ``large good''; for agents that are removed, we allocate each of them a ``small good''.
Thus, in each iteration, exactly~$n$ goods are allocated and each agent receives exactly one good.
We stop when the number of the remaining indivisible goods is less than~$n$.

Now, the number of unallocated indivisible goods is at most~$2n-2$, including at most~$n-1$ goods that are temporarily removed by the algorithm (those that are neighbors of removed agents) and at most~$n-1$ goods that remain after the last iteration.
\emph{It is crucial that the current partial allocation satisfies envy-freeness} (which can be easily verified for each iteration based on the property of minimal unmatchable groups).
This allows us to reduce our problem to the case with at most~$2n-2$ indivisible goods, as we can fix the allocation of the indivisible goods allocated by the above procedure.
In \Cref{append.nagentmixed2}, we formally describe this procedure and the reduction to $m \leq 2n-2$.

The most technical part comes to the handling of the case with~$m \leq 2n-2$, which is described in \Cref{append.nagentmixed3}.
The case with $m \leq n$ has already been handled in \Cref{sect:mlen}.
The case where $n < m \leq 2n-2$ is handled with a similar idea of comparing to a baseline allocation but requiring a substantial amount of additional effort.
We again find a minimal unmatchable group~$Z$ of agents, and agents in~$Z$ and in~$U\setminus Z$ are handled separately in the ``last two rounds''.
Sample a permutation~$\pi$ uniformly at random.
In the first round, agents with ``higher priority'' in~$Z$ receive ``the last large good'', while all agents in $U \setminus Z$ receive large goods.
In the second round, we allocate agents with ``higher priority'' one good in the Round-Robin way.
Finally, the water-filling process is applied to allocate the divisible good.
The analysis for \exan proportionality is the most tricky part.
Different baseline allocations are used for agents in~$Z$ and agents in~$U \setminus Z$.

\section{BoBW Fairness with Indivisible Goods: \\
Ex-Ante EF + Ex-Post EFX + Ex-Post fPO}
\label{sec:n-agent-indivisible}

\begin{algorithm}[t]
\caption{Bi-valued indivisible goods: Ex-ante EF, \expo EFX, and \expo fPO}
\label{alg:efx_fpo}
\DontPrintSemicolon

\KwIn{Agents~$N$, indivisible goods~$M$, and agents' utility functions.}

$\forall i \in [n]$, $A_i \gets \emptyset$, $c_i \gets 0$\; \label{alg_efx_fpo_init_begin}
$t \gets 1$, $\cnt \gets 0$\;
Uniformly generate a random permutation~$\pi$ of the agents.\; \label{alg_efx_fpo_init_end}

\While{$|M| \geq [\#~\textup{unfrozen agents}] + \cnt$}{ \label{alg_efx_fpo_whileloop1_begin}
	Find the minimal unmatchable group~$Z_t$ among~$N$ and~$M$ according to \Cref{alg:find_unmatchable_group}.\; \label{alg_efx_fpo_zt_begin}

	\While{$|Z_t| > 0$ and there exists~$i \in N \setminus Z_t$ and~$g \in A_i$ s.t.\ $u_j(g) = b$ for an agent~$j \in Z_t$}{ \label{alg_efx_fpo_whileloop2_begin}
		Let~$g' \in M$ be the good matched to agent~$i$ in a perfect matching between $N \setminus Z_t$ and $M \setminus \Gamma(Z_t)$.\;
		$M \gets M \cup \{g\} \setminus \{g'\}$, $A_i \gets A_i \cup \{g'\} \setminus \{g\}$\;
		Find the minimal unmatchable group~$Z_t$ among~$N$ and~$M$ via \Cref{alg:find_unmatchable_group}.\; \label{alg_efx_fpo_whileloop2_end}
	}

	\eIf{$|Z_t| = 0$}{ \label{alg_efx_fpo_case1_begin}
		Find a perfect matching between~$N$ and~$M$. For each agent~$i \in N$, add her matched good to~$A_i$. Remove the matched goods from~$M$.\; \label{alg_efx_fpo_case1_end}
	}{
		\ForEach{$i \in Z_t$ in the reversed order of~$\pi$}{ \label{alg_efx_fpo_case2_begin}
			\uIf{$i$ can find an augmenting path to a good in~$\Gamma(Z_t)$}{ \label{alg_efx_fpo_case2_alloc}
				Update~$\mathcal{A}^\pi$ according to this path.\;
				Freeze~$i$ for the next $\lfloor b/a -1 \rfloor$ rounds. \;\label{alg_efx_fpo_case2_freeze}
			}\lElse{
				Set~$i$ as quiet. \label{alg_efx_fpo_case2_quiet}
			}
		}
		Match the corresponding good to each agent in~$N \setminus Z_t$ in a perfect matching between~$N \setminus Z_t$ and~$M \setminus \Gamma(Z_t)$.\; \label{alg_efx_fpo_case2_perfect_begin}
		$N \gets N \setminus Z_t$, $M \gets M \setminus \Gamma(Z_t)$, $t \gets t+1$\; \label{alg_efx_fpo_case2_end}
	}
	$\cnt \gets \cnt + [\#~\textup{quiet agents}]$\; \label{alg_efx_fpo_updatec_begin}
	$c_i \gets c_i + 1$ for each quiet agent~$i$.\; \label{alg_efx_fpo_whileloop1_end}
}

Let $M' \gets M \cup \bigcup_{i \in N} A_i$ and match each~$g \in A_i$ for some~$i \in N$ to a copy of agent~$i$ initially.\; \label{alg_efx_fpo_rem_begin}

\ForEach{agent $i$ in the first $|M|-\cnt$ of the order $\pi$ after removing the frozen agents}{
	\leIf{$i$ can find an augmenting path to a good in~$M'$}{ \label{alg_efx_fpo_rem_aug}
		Update~$\mathcal{A}^\pi$ according to this augmenting path.\;
	}{
		$c_i\leftarrow c_i+1$ \label{alg_efx_fpo_rem_end}
	}
}

Allocate the remaining goods arbitrarily based on~$c_i$.\; \label{alg_efx_fpo_clean}

\Return{Allocation~$\mathcal{A}^\pi$}
\end{algorithm}

In this section, we focus on the indivisible-goods setting and investigate the best-of-both-world guarantee for both fairness and efficiency when the agents have bi-valued utilities.
Here, we will assume $a > 0$ for a clearer demonstration.
For the case of $a = 0$, it can be reduced to the binary setting and has been addressed~\citep{BabaioffEzFe21}.
Our main result is the following:

\begin{theorem}
\label{thm:efx_fpo}
In the indivisible-goods setting where agents have bi-valued utilities, \Cref{alg:efx_fpo} computes in polynomial time an integral allocation sampled from a randomized allocation that is ex-ante EF, ex-post EFX and ex-post fPO.
\end{theorem}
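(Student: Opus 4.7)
The plan is to split the proof into four parts corresponding to the four claims (polynomial time, ex-post fPO, ex-post EFX, ex-ante EF), proving them in roughly that order since the latter proofs reuse structural facts from the earlier ones. The central object to keep track of, throughout, is the bipartite graph $G=(N,M,E)$ whose edges $(i,g)$ encode $u_i(g)=b$; most of the algorithm's moves (perfect matchings, augmenting paths, minimal unmatchable groups $Z_t$) are defined against~$G$, so invariants about which agent holds which type of good can be maintained by standard matching arguments.

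First I would argue polynomial runtime. Each pass of the outer \texttt{while} loop either executes the $|Z_t|=0$ branch (which removes at least $n$ goods from~$M$) or the $|Z_t|>0$ branch (which permanently deletes $Z_t$ together with $\Gamma(Z_t)$ from $N$ and $M$, with $|Z_t|\le n$). The inner \texttt{while} loop starting at \cref{alg_efx_fpo_whileloop2_begin} strictly decreases the number of large-valued edges between $Z_t$ and bundles held by $N\setminus Z_t$, hence terminates. Computing minimal unmatchable groups, augmenting paths and perfect matchings are all polynomial, and the post-loop cleanup touches each good at most once.

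For ex-post fPO I would use the well-known characterization for bi-valued additive utilities: an allocation is fPO iff the induced ``large-good graph'' (in which good $g$ held by $i$ is an edge only if $u_i(g)=b$, complemented by the swap graph between agents) contains no Pareto-improving cycle or path. The algorithm maintains the invariant that whenever a good $g$ with $u_j(g)=b$ sits in some $A_i$ with $u_i(g)=a$, no agent $j$ who values $g$ at $b$ has a ``cheaper'' good to trade for it; the inner \texttt{while} loop at lines \ref{alg_efx_fpo_whileloop2_begin}--\ref{alg_efx_fpo_whileloop2_end} together with the augmenting-path updates in \cref{alg_efx_fpo_case2_alloc} and \cref{alg_efx_fpo_rem_aug} is exactly what eliminates any such improving path. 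After showing this invariant is preserved by every iteration, fPO follows from the standard characterization.

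For ex-post EFX the key is that, with $u_j(\cdot)\in\{a,b\}$, removing \emph{any} good from $A_i$ is the same as removing a single ``large'' good (value~$b$) or a single ``small'' good (value~$a$). I would prove a per-round invariant that, for any agent~$j$ and any other agent~$i$, $u_j(A_j)\ge u_j(A_i)-b$ when $A_i$ contains a large $j$-good, and $u_j(A_j)\ge u_j(A_i)-a$ otherwise. Membership in a minimal unmatchable group (with property~(2) from \Cref{sec:generalm}) guarantees that the goods in $\Gamma(Z_t)$ are small for all agents outside $Z_t$, so that large-good inflation between rounds is tightly controlled; the freezing of agent $i$ for $\lfloor b/a-1\rfloor$ rounds after receiving a large good through an augmenting path in \cref{alg_efx_fpo_case2_freeze} caps her lead in large goods so that nobody else's envy can ever exceed one large good. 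Splitting the analysis into the $|Z_t|=0$ branch (where everyone gets one large good) and the $|Z_t|>0$ branch (where quiet agents later receive compensatory small goods tracked by $c_i$) lets me reduce EFX to these two bookkeeping inequalities.

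The main obstacle, as in \Cref{sec:n-agent-mixed}, will be ex-ante EF. I would import the baseline-comparison technique, defining for each ordered pair $(i,j)$ a coupling of the randomness over $\pi$ that pairs the run under~$\pi$ with one under a permutation $\pi'$ obtained by swapping $i$ and $j$, and showing that the sum $u_i(A_i^\pi)+u_i(A_i^{\pi'})\ge u_i(A_j^\pi)+u_i(A_j^{\pi'})$. The delicate point is that the freeze counter and the quiet/counter $c_i$ compensation in \cref{alg_efx_fpo_clean} are exactly what makes this pairwise coupling work: on average each agent receives the same number of large goods as any other (so in expectation no envy on large goods arises), and the counter-guided cleanup ensures the small-good shortfall is also neutralized. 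Because the coupling is within a single iteration of the outer loop and within the cleanup phase respectively, the argument decomposes across rounds, and summing over rounds plus the cleanup yields $\mathbb{E}[u_i(A_i)]\ge \mathbb{E}[u_i(A_j)]$. Verifying the coupling-invariant through the interaction between augmenting-path updates in \cref{alg_efx_fpo_case2_alloc} and the freeze rule is where I expect most of the technical work to lie.
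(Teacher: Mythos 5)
Your four-part decomposition and, in particular, your plan for \exan EF (coupling each $\pi$ with the permutation $\pi'$ that swaps $i$ and $j$ and bounding the loss under one by the gain under the other) is the same route the paper takes, and your EFX plan is structurally aligned with the paper's round-by-round induction. The main genuine gap is the fPO part: you invoke as ``well-known'' a characterization of fPO for bi-valued utilities via absence of Pareto-improving cycles/paths in a large-good/swap graph, together with a vague invariant that no agent who values a misallocated good at $b$ has a ``cheaper'' good to trade for it. Neither the characterization nor the invariant is stated precisely or proved, and ruling out single swaps or short trades is not sufficient for \emph{fractional} Pareto optimality, which must exclude arbitrary fractional reallocations and long trading chains. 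The paper's proof does real work here: it constructs an explicit price vector (prices equal to the owner's values for agents in some $Z_t$ and for agents holding both large and small goods, then an iterative price propagation, with leftover ``low-price'' goods priced at $a$), verifies that every agent holds only MBB goods --- using a backward-path argument that hinges on the augmenting-path step in \cref{alg_efx_fpo_rem_aug} --- and then invokes the First Welfare Theorem, with a separate price construction when $m<n$. Your sketch contains none of this machinery, and without it the \expo fPO claim is unsupported.

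Two further problems are smaller but real. Your EFX invariant (``$u_j(A_j)\ge u_j(A_i)-b$ when $A_i$ contains a large $j$-good, else $-a$'') is weaker than EFX: if $A_i$ contains both a large and a small good for $j$, EFX demands the $-a$ bound, so maintaining your invariant does not yield EFX; moreover the genuinely delicate case that the paper's induction isolates --- two agents in the same $Z_t$ where one has never been frozen and the other has, which is exactly why \cref{alg_efx_fpo_case2_begin} processes $Z_t$ in the \emph{reversed} order of $\pi$ and why the pick order in the final phase matters --- does not appear in your sketch, and it is also the exceptional case that drives the \exan EF coupling argument. Finally, your termination argument for the inner loop (\crefrange{alg_efx_fpo_whileloop2_begin}{alg_efx_fpo_whileloop2_end}) claims the number of large edges between $Z_t$ and goods held by $N\setminus Z_t$ strictly decreases, but $Z_t$ is recomputed after every swap and can change, so this quantity is not obviously monotone; the paper instead uses the potential that each swap either increases the maximum-matching size (at most $n$ times) or increases $|\Gamma(Z_t)|$ (at most $n$ consecutive times before the matching grows), bounding the loop by $O(n^2)$ iterations.
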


Theorem~5 of \citet{AzizFrSh24} presented an instance with two goods and two agents with additive utilities to show that no randomized allocation is simultaneously \exan EF, \expo EF1, and \expo fPO.
Their impossibility result continues to hold when replacing \exan EF by \exan PROP.
It is worth noting that the instance consists of three possible values for the goods, which can be considered in the settings allowing tri-valued utilities or \emph{personalized} bi-valued utilities.
Their impossibility result indicates that the assumption of bi-valued utilities is \emph{necessary} in order to achieve best-of-both-world fairness \emph{and} economic efficiency.
For tri-valued or personalized bi-valued utilities, \citeauthor{AzizFrSh24}'s impossibility result holds for the weaker notions of ex-ante PROP and ex-post EF1, with two agents and two goods.
For bi-valued utilities, our \Cref{thm:efx_fpo} indicates that the best-of-both-worlds fairness can hold for the stronger notions of ex-ante EF and ex-post EFX, with an arbitrary number of agents and goods.
Our result shows a sharp contrast between bi-valued and tri-valued utilities, and thereby completes the whole picture.

\subsection{Technique Discussion and Detailed Descriptions of Algorithm~\ref{alg:efx_fpo}}
\label{sec:ind:alg-description}

For bi-valued utilities, \citet{GargMu23} devised a polynomial-time algorithm to find an EFX and fPO allocation.
Their algorithm updates the allocation through consecutive transfers of goods along a path, a process seemingly challenging to implement for achieving an \exan fairness guarantee.
We instead draw inspiration from the \MatchAndFreeze algorithm of \citet{AmanatidisBiFi21}, which computes an \EFX allocation for bi-valued utilities in polynomial time.
It is worth noting that the \MatchAndFreeze need not return a PO allocation~\citep[Appendix~A.2]{GargMu23}.
In what follows, we present the first polynomial-time algorithm which can return an \exan EF, \expo EFX, and \expo fPO allocation when agents have bi-valued utilities.
When describing our \Cref{alg:efx_fpo}, we use the term ``round'' interchangeably with each execution of the loop in \cref{alg_efx_fpo_whileloop1_begin}.

We first introduce three possible states for the agents: \emph{active}, \emph{quiet} and \emph{frozen}.
Initially, all agents are \emph{active}.
When executing \crefrange{alg_efx_fpo_case2_begin}{alg_efx_fpo_case2_end}, if an agent~$i \in Z_t$ is allocated a large good in \cref{alg_efx_fpo_case2_alloc}, she will become \emph{frozen} for the next $\lfloor b/a-1 \rfloor$ rounds, indicating that she will not receive any good in these $\lfloor b/a-1 \rfloor$ rounds.
After those rounds, she will become \emph{quiet}.
If $i \in Z_t$ does not receive a large good at this round, she will directly become \emph{quiet} in \cref{alg_efx_fpo_case2_quiet}.
Intuitively, during the execution of the loop in \crefrange{alg_efx_fpo_whileloop1_begin}{alg_efx_fpo_whileloop1_end}, an agent~$i$ can only accept small goods after she turns quiet, and the count for these goods is recorded in~$c_i$.
We also refer to active and quiet agents as \emph{unfrozen} agents.

We now describe \Cref{alg:efx_fpo}.
Lines~\ref{alg_efx_fpo_init_begin}-\ref{alg_efx_fpo_init_end} initialize the algorithm, where~$c_i$ records the number of reserved small goods for agent~$i$, crucial for the market clear in \cref{alg_efx_fpo_clean}.
The counter~$\cnt$ records the total number of reserved goods, precisely the sum of all~$c_i$'s during the loop (lines~\ref{alg_efx_fpo_whileloop1_begin}-\ref{alg_efx_fpo_whileloop1_end}).
A permutation~$\pi$ is generated uniformly at random for subsequent allocations.

The main algorithm proceeds through a multi-round procedure in lines~\ref{alg_efx_fpo_whileloop1_begin}-\ref{alg_efx_fpo_whileloop1_end}.
In each round, we first check whether the remaining goods in~$M$ are sufficient to match one to each unfrozen agent in \cref{alg_efx_fpo_whileloop1_begin}.
If this condition is met, we attempt to find a specific minimal unmatchable group~$Z_t$ such that all agents in~$Z_t$ have value~$a$ over each good matched to agents in~$N \setminus Z_t$ in lines~\ref{alg_efx_fpo_zt_begin}-\ref{alg_efx_fpo_whileloop2_end}.\footnote{The relevant concepts can be found in \Cref{append.nagentmixed1}.}
This is to ensure the efficiency guarantee by allocating more large goods intuitively.

Given~$Z_t$, if~$|Z_t| = 0$, indicating that there exists a perfect matching between all agents in~$N$ and the current set of goods in~$M$, lines~\ref{alg_efx_fpo_case1_begin}-\ref{alg_efx_fpo_case1_end} are executed to add this matching directly.
When~$Z_t$ is non-empty, we first find a maximum matching between~$Z_t$ and~$\Gamma(Z_t)$ using the augmenting path technique in the reversed order of~$\pi$ (lines~\ref{alg_efx_fpo_case2_begin}-\ref{alg_efx_fpo_case2_quiet}).
All goods in~$\Gamma(Z_t)$ can be allocated to some agent from the construction of~$Z_t$ (see \Cref{alg:find_unmatchable_group}).
The reversed order in \cref{alg_efx_fpo_case2_begin} ensures \expo EFX after the final allocation step in lines~\ref{alg_efx_fpo_rem_begin}-\ref{alg_efx_fpo_rem_end}.
We then find a perfect matching between $N \setminus Z_t$ and $M \setminus \Gamma(Z_t)$ in lines~\ref{alg_efx_fpo_case2_perfect_begin}-\ref{alg_efx_fpo_case2_end} and match these goods accordingly.
At the end of each round, we maintain~$c_i$ for each quiet agent~$i$ and the counter~$\cnt$, to allocate a good to those who have not received a good at this round \emph{virtually} (lines~\ref{alg_efx_fpo_updatec_begin}-\ref{alg_efx_fpo_whileloop1_end}).

If there are not enough goods to allocate one to each unfrozen agent, the \verb|while|-loop in \cref{alg_efx_fpo_whileloop1_begin} is terminated and the final allocation step in lines~\ref{alg_efx_fpo_rem_begin}-\ref{alg_efx_fpo_rem_end} is executed.
At this step, a large good is allocated to the first $|M| - \cnt$ unfrozen agents under~$\pi$ as far as possible.
\Cref{alg_efx_fpo_rem_begin} can adjust the previous allocation for the remaining agents in~$N$ for a better efficiency guarantee.
\Cref{alg_efx_fpo_rem_end} updates the number of reserved small goods for an agent if she cannot be matched to any large good.
\Cref{alg_efx_fpo_clean} clears the market based on the reserved number $c_i$ for each agent $i$, which allocates these goods \emph{actually}.
We can then obtain the desired randomized allocation.

Throughout the algorithm, the randomization of our algorithm (the choice of~$\pi$) only influences the allocation in lines~\ref{alg_efx_fpo_case2_begin}-\ref{alg_efx_fpo_case2_end} and~\ref{alg_efx_fpo_rem_begin}-\ref{alg_efx_fpo_rem_end}, which would not affect the assignment of agents to groups~$Z_t$ (resp., goods to~$\Gamma(Z_t)$).
This observation will be helpful for the analysis of the \exan property.

\subsection{Fairness}

Before presenting the results, we first provide some properties of the allocation returned by \Cref{alg:efx_fpo}.
Let~$N^t$ and~$M^t$ be the corresponding set~$N$ and~$M$ at the beginning of round~$t$.

\begin{observation}
\label{obs:efx_fpo}
For each agent $i \in N$,
\begin{itemize}
\item she receives exactly one good at each unfrozen round;

\item if she belongs to some~$Z_t$ which is removed from~$N$ in \cref{alg_efx_fpo_case2_end} (assuming this corresponds to round~$r_i$), she values each good allocated to her before round~$r_i$ at~$b$ and each good in~$A_j$ for~$j \in N^{r_i} \setminus Z_t$ and $M^{r_i} \setminus \Gamma(Z_t)$ at~$a$;

\item if she does not belong to any~$Z_t$, she values each good allocated to her before the termination of the \verb|while|-loop in \cref{alg_efx_fpo_whileloop1_begin} at~$b$.
\end{itemize}
\end{observation}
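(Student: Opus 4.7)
The plan is to prove the three bullets by induction on the round index $t$ of the main \verb|while|-loop, tracking the state (active, frozen, or quiet) of each agent and the combinatorial properties of the minimal unmatchable group $Z_t$ computed in \cref{alg_efx_fpo_zt_begin}. Throughout, I would work with the bipartite graph $G_t = (N^t, M^t, E_t)$ where $E_t = \{(i,g) : u_i(g) = b\}$; by definition, $\Gamma(Z_t)$ is the neighborhood of $Z_t$ in $G_t$, and every perfect matching used in \cref{alg_efx_fpo_case1_end} or \cref{alg_efx_fpo_case2_perfect_begin} consists of $b$-edges, so any agent matched through such a matching receives a large good.

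For the first bullet I perform a case split on the round. If $|Z_t|=0$, then \cref{alg_efx_fpo_case1_end} assigns exactly one good to every agent in the current $N$, all of whom are active. If $|Z_t|>0$, then \cref{alg_efx_fpo_case2_perfect_begin} assigns exactly one good to every agent in $N^t\setminus Z_t$; each agent in $Z_t$ either receives a good through the augmenting path in \cref{alg_efx_fpo_case2_alloc} (and becomes frozen) or is set to quiet, in which case $c_i$ is incremented in \cref{alg_efx_fpo_whileloop1_end}, which I count as the virtual receipt of one small good. The same line bumps $c_i$ for every previously-quiet agent in every subsequent round, while frozen agents lie neither in $N^t$ nor in the quiet set and are therefore skipped by both the matching step and the $c_i$-update. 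For the third bullet, an agent $i$ who never joins any $Z_t$ remains active throughout the loop, so in every round either \cref{alg_efx_fpo_case1_end} or \cref{alg_efx_fpo_case2_perfect_begin} matches her via a $b$-edge of $G_t$; the same reasoning, applied to rounds $1,\ldots,r_i-1$ (throughout which $i$ is active by the minimality of $r_i$), yields the first half of the second bullet.

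The remaining task, and the main obstacle, is the second half of the second bullet, namely that at the start of round $r_i$ every good in $\bigcup_{j\in N^{r_i}\setminus Z_{r_i}}A_j$ and every good in $M^{r_i}\setminus \Gamma(Z_{r_i})$ has value $a$ to each $i\in Z_{r_i}$. The $M^{r_i}\setminus \Gamma(Z_{r_i})$ part is immediate from the definition of the neighborhood in $G_{r_i}$: no agent of $Z_{r_i}$ has a $b$-edge to any good outside $\Gamma(Z_{r_i})$. The $A_j$ part is precisely the invariant enforced by the inner loop in \crefrange{alg_efx_fpo_whileloop2_begin}{alg_efx_fpo_whileloop2_end}, which repeatedly searches for a violating triple $(j,g,k)$ with $j\in N\setminus Z_t$, $g\in A_j$, $k\in Z_t$, $u_k(g)=b$, swaps $g$ out of $A_j$ for the good $g'$ matched to $j$ in the current perfect matching between $N\setminus Z_t$ and $M\setminus\Gamma(Z_t)$, and then recomputes $Z_t$. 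I would verify that (i)~each swap strictly decreases a suitable potential function, for instance the number of $b$-edges from $Z_t$ into $\bigcup_{j\notin Z_t}A_j$, which delivers polynomial-time termination, and (ii)~after each swap there still exists a minimal unmatchable group with the properties needed by the outer case split. Step (ii) is the delicate point because it relies on the combinatorial characterization of minimal unmatchable groups encoded in \Cref{alg:find_unmatchable_group} of the appendix; once it is in hand, the terminating condition of the inner loop directly yields the stated invariant, and the observation follows from the bookkeeping above.
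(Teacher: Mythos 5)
Your proposal is correct and matches the intended justification: the paper states this as an unproved observation to be read off directly from \Cref{alg:efx_fpo}, and your verification (perfect matchings and augmenting paths consist only of $b$-edges, the swaps in \crefrange{alg_efx_fpo_whileloop2_begin}{alg_efx_fpo_whileloop2_end} replace a good with one matched via a $b$-edge, the $c_i$ increments account for quiet agents' virtual goods, and the inner loop's exit condition plus the definition of $\Gamma(Z_t)$ give the value-$a$ claims) is exactly the right reading. Your worries about inner-loop termination and the recomputation of $Z_t$ are not needed for the observation itself --- they concern well-definedness and running time, which the paper addresses separately in the proof of \Cref{thm:efx_fpo} using a different potential (maximum-matching size and $|\Gamma(Z_t)|$) rather than the $b$-edge count you suggest, whose monotonicity under recomputation of $Z_t$ is unclear.
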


According to these observations, we first show that each output integral allocation by our algorithm is EFX.
To prove this, we adopt the induction to maintain the following property after the matching at each round: for each agent~$i$, this agent will not envy other agents~$j$ except for the case that agents~$i$ and~$j$ are in the same~$Z_t$ and agent~$i$ has never been frozen while agent~$j$ has been frozen.
Under this case, agent~$i$ does not envy~$j$ using the EFX criteria.
Together with the analysis for the final allocation step, we conclude the lemma.

\begin{lemma}
\label{lem:efx_fpo_expost_efx}
Every integral allocation returned by \Cref{alg:efx_fpo} is EFX.
\end{lemma}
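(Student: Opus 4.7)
The plan is to prove EFX by induction on the rounds of the main while-loop (\cref{alg_efx_fpo_whileloop1_begin}), maintaining, in line with the paper's outline, the following invariant after each round's matching: for every pair of agents $i, j$, either $i$ does not envy $j$ (taking into account the reserved small goods tracked by $c_i$), or $i$ and $j$ belong to the same minimal unmatchable group $Z_t$ with $i$ never frozen while $j$ has been frozen, in which case $i$ satisfies EFX toward $j$. After establishing this invariant, I would analyse the final allocation step (\crefrange{alg_efx_fpo_rem_begin}{alg_efx_fpo_clean}) separately to upgrade it into unconditional EFX on the actually allocated bundles.

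For the inductive step on round $t$, I would split on whether $Z_t$ is empty. If $|Z_t| = 0$, a perfect matching places one new good of value $b$ into each active agent's bundle, and by \Cref{obs:efx_fpo} every envy comparison shifts uniformly, so the invariant is preserved. If $|Z_t| > 0$, the inner normalisation loop in \crefrange{alg_efx_fpo_whileloop2_begin}{alg_efx_fpo_whileloop2_end} guarantees that every good in the bundle of an agent in $N \setminus Z_t$ takes value $a$ from the viewpoint of every agent in $Z_t$, which prevents a $Z_t$ agent from starting to envy an $N \setminus Z_t$ agent in this round. Processing $Z_t$ in the reverse order of $\pi$ is essential: it ensures that any quiet agent $i \in Z_t$ precedes in $\pi$ every frozen agent $j \in Z_t$, so that in the final augmenting-path step (\cref{alg_efx_fpo_rem_aug}) $i$ is processed before $j$ and can redirect a large good to herself if any remains.

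The crux of the argument is the freeze-duration accounting. When $j$ is frozen for $\lfloor b/a - 1 \rfloor$ subsequent rounds after receiving a large good of value $b$ to $i$, the never-frozen $i$ in the same $Z_t$ is quiet and accrues one reserved small good per round via $c_i$ for a total of $\lfloor b/a \rfloor$ rounds, contributing a value of $a \lfloor b/a \rfloor$ to her eventual bundle. The resulting gap $u_i(A_j) - u_i(A_i)$ is therefore at most $b - a \lfloor b/a \rfloor < a$, which is precisely the EFX slack permitted by dropping any single $a$-valued good from $A_j$; the sub-case where $A_j$ contains no $a$-valued good to $i$ is handled by dropping $j$'s freshly received large good, for which the gap is at most $b$.

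The hardest part will be carefully tracking the augmenting-path updates, both those in the reverse-order $Z_t$-phase (\cref{alg_efx_fpo_case2_alloc}) and the forward-order ones in the final allocation step (\cref{alg_efx_fpo_rem_aug}), which redistribute large goods among agents. I would show that each such redistribution is a chain of equal-value swaps in the bi-valued setting, so no agent's total utility decreases, and that the value-$a$ uniformity property of minimal unmatchable groups (established in \Cref{append.nagentmixed1}) combined with \Cref{obs:efx_fpo} prevents these redistributions from triggering new envy through an indirect third party, thereby propagating the invariant all the way to the final bundles produced by \cref{alg_efx_fpo_clean}.
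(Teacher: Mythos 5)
Your overall strategy does mirror the paper's proof---round-by-round induction with the invariant that no envy exists except for same-$Z_t$ pairs in which $i$ was never frozen and $j$ was, plus the freeze-duration accounting $a+\lfloor b/a-1\rfloor\cdot a\le b$---and that part is sound. The genuine gap is your unconditional claim that processing $Z_t$ in the reversed order of $\pi$ ``ensures that any quiet agent $i\in Z_t$ precedes in $\pi$ every frozen agent $j\in Z_t$.'' This is false: let $\Gamma(Z_t)$ consist of a single good $g$ with $u_j(g)=b$ and $u_i(g)=a$, and let $\pi$ place $j$ before $i$; then $i$ is processed first in the reversed order, finds no augmenting path into $\Gamma(Z_t)$, becomes quiet, and $j$, processed afterwards, is matched to $g$ and frozen, so the quiet agent follows the frozen one in $\pi$. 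What is true, and what the paper actually proves, is the conditional statement needed for the exceptional pair: \emph{if} the never-frozen agent $i$ values the good $g$ matched to the frozen agent $j$ at $b$, then $i$ must precede $j$ in $\pi$, because otherwise $i$ would be processed before $j$ in the reversed order and, since $i$ could take $g$ directly while the rest of the partial matching is retained, an augmenting path from $i$ into $\Gamma(Z_t)$ would exist at her turn and $i$ would have been frozen---a contradiction. Your proposal asserts the ordering without this matching-maximality argument, which is precisely the missing idea; without it the case $u_i(g)=b$ is unproven.

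Relatedly, the conclusion you draw from the ordering is not the one the proof needs. In the final step a previously removed agent $i$ cannot ``redirect a large good to herself'': by the second bullet of \cref{obs:efx_fpo}, every good still circulating in $M'$ at that point (goods left in the pool or held by agents never placed in any $Z_t$) is worth only $a$ to her. What the ordering buys is weaker but sufficient: since $i$ precedes $j$ in $\pi$, $i$ is offered a good in the final step no later than $j$, so any residual envy from $i$ to $j$ is eliminated by removing the single good $j$ receives in that step, which is exactly the EFX criterion. The same observation is also what substantiates your last, currently hand-wavy paragraph: the final-step augmenting-path reshuffles cannot create $b$-valued gains in other bundles from the viewpoint of a removed agent, and for surviving agents all bundles remain equal-sized with their own goods valued at $b$, so removing the final-step good restores the invariant. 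With the conditional ordering lemma proved and these uses corrected, your argument coincides with the paper's.
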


We then come to the \exan fairness guarantee of our algorithm.
The idea is to prove any envy from an agent~$i$ to another agent~$j$ under~$\pi$ can be eliminated by the gain under another permutation~$\pi'$ which just exchanges~$i$ and~$j$ in~$\pi$.

\begin{lemma}
\label{lem:efx_fpo_exante_ef}
\Cref{alg:efx_fpo} returns an ex-ante EF allocation.
\end{lemma}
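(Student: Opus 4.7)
I will prove ex-ante envy-freeness via a pairing argument on the uniformly random permutation $\pi$. Fix any two distinct agents $i, j \in N$ and, for each permutation $\pi$, let $\pi^{ij}$ denote the permutation obtained by exchanging the positions of $i$ and $j$ in $\pi$. Since $\pi \mapsto \pi^{ij}$ is a fixed-point-free involution on the set of $n!$ permutations and $\pi$ is drawn uniformly, the outcomes pair up into $\{\pi, \pi^{ij}\}$ of equal probability. Writing $A_i^\pi$ for agent $i$'s realized bundle under $\pi$, ex-ante EF from $i$ towards $j$ reduces to establishing the pointwise inequality
\[
u_i(A_i^\pi) + u_i(A_i^{\pi^{ij}}) \;\geq\; u_i(A_j^\pi) + u_i(A_j^{\pi^{ij}})
\]
for every $\pi$.

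To verify the pointwise inequality, I will lean on the two structural facts highlighted in \Cref{sec:ind:alg-description}: the sequence of minimal unmatchable groups $Z_1, Z_2, \ldots$ and the associated neighborhoods $\Gamma(Z_t)$ produced by the main loop do not depend on $\pi$; and $\pi$ enters the algorithm only through the reversed-$\pi$ order used inside each $Z_t$ (to decide which agents consume augmenting paths to large goods and therefore become frozen) and the forward-$\pi$ order used in the final allocation step. Consequently, the bundles of all agents other than $i$ and $j$ are identical under $\pi$ and $\pi^{ij}$, so only the bundles of $i$ and $j$ need to be tracked.

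The inequality is then proved by case analysis on how $i$ and $j$ relate to the groups $\{Z_t\}$. If $i$ and $j$ both belong to the same $Z_t$, the swap merely transposes their positions in the reversed ordering within $Z_t$, so the pair of outcomes assigned to $\{i, j\}$ by the augmenting-path procedure is preserved as an unordered multiset; combined with bi-valuedness of $u_i$, this yields the inequality (with equality on the sum). If $i$ and $j$ lie in different groups, or if at most one of them is ever removed as part of some $Z_t$, the relative ordering within every single $Z_t$ is untouched by the swap, and its only effect is on the forward-$\pi$ order in the final step; using \Cref{obs:efx_fpo} to pin down each good's value to $i$, one checks that whichever of $i$ and $j$ is processed first in the final step under $\pi^{ij}$ collects essentially what the other did under $\pi$, again closing the sum inequality by bi-valuedness.

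The main obstacle I anticipate is the bookkeeping around the freezing mechanism: an agent who receives a large good inside some $Z_t$ is frozen for $\lfloor b/a - 1 \rfloor$ subsequent rounds, forgoes the small goods accumulated by other unfrozen quiet agents over that span, and finally clears the backlog via the counters $c_i$ in the cleanup step. Swapping $i$ and $j$ may therefore transfer a large good of value $b$ from one to the other while reversing the trade in reserved small goods of value $a$. The freeze length $\lfloor b/a - 1 \rfloor$ is calibrated precisely so that $\lfloor b/a - 1 \rfloor \cdot a < b$, which is exactly the quantitative inequality needed for the pointwise bound to survive this trade. Once the inequality is verified in every case, averaging over $\pi$ delivers $\mathbb{E}_\pi[u_i(A_i^\pi)] \geq \mathbb{E}_\pi[u_i(A_j^\pi)]$, and since $i, j$ were arbitrary the randomized allocation is ex-ante EF.
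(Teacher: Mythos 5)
Your overall strategy---pairing $\pi$ with the transposition $\pi^{ij}$ and proving the pointwise compensation inequality $u_i(A_i^\pi)+u_i(A_i^{\pi^{ij}})\ge u_i(A_j^\pi)+u_i(A_j^{\pi^{ij}})$---is exactly the route the paper takes. The gap is that the case analysis, which is the entire substance of the proof, is asserted rather than established. The claim that agents other than $i$ and $j$ receive identical bundles under $\pi$ and $\pi^{ij}$ is false in general: whether a third agent $k\in Z_t$ gets matched (and hence frozen) by the greedy augmenting-path pass depends on the set of agents processed before $k$, and this set changes whenever exactly one of $i,j$ precedes $k$ in the reversed order. (This particular error is not fatal, since only $u_i(A_i)$ and $u_i(A_j)$ enter the inequality, but it signals that the ``only swap $i$ and $j$'s outcomes'' picture is too optimistic.) More importantly, the statements ``the pair of outcomes assigned to $\{i,j\}$ is preserved as an unordered multiset'' (same-$Z_t$ case) and ``whichever of $i$ and $j$ is processed first in the final step under $\pi^{ij}$ collects essentially what the other did under $\pi$'' are precisely the nontrivial facts that need proof, and they do not follow from bi-valuedness alone. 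The paper proves them with maximum-matching arguments: if $j$ was frozen under $\pi$ and $i$ values $j$'s matched good at $b$, then $i$ becomes frozen under $\pi^{ij}$, and $i$ and $j$ cannot \emph{both} be frozen under $\pi^{ij}$ (otherwise an augmenting path would have frozen $i$ under $\pi$ as well, contradicting maximality of the matching built up to $i$'s position); analogously, in the final step one must show that $i$ and $j$ cannot both receive $i$-large goods under $\pi^{ij}$ when $i$ envied $j$ under $\pi$, and that in the ``$j$ got a good, $i$ got nothing'' case agent $i$ obtains under $\pi^{ij}$, via an augmenting path, a good of weakly larger $u_i$-value while $j$ gets nothing. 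None of these appears in your write-up.

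A secondary miscalibration: you present $\lfloor b/a-1\rfloor\cdot a<b$ as ``exactly the quantitative inequality needed'' to close the pointwise bound. That inequality is what drives the \emph{ex-post EFX} analysis (\Cref{lem:efx_fpo_expost_efx}); for ex-ante EF the paper's compensation in the frozen/unfrozen case is an exact exchange---from $i$'s perspective the difference between the two bundles is the same under $\pi$ and $\pi^{ij}$ once one knows that exactly one of $i,j$ is frozen under $\pi^{ij}$---so the burden lies on the combinatorial matching facts above, not on the freeze-length arithmetic. As it stands, your proposal reproduces the paper's framework but leaves its core claims unproven, so it does not constitute a complete proof.
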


\subsection{Efficiency}

We leverage the Fisher market (relevant concepts in Appendix) to provide our efficiency guarantee.

\begin{lemma}
\label{lem:efx_fpo_expost_fpo}
Every integral allocation returned by Algorithm~\ref{alg:efx_fpo} is fPO.
\end{lemma}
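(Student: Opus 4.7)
The plan is to certify \expo fPO via a Fisher-market argument: by the First Welfare Theorem for linear Fisher markets, it suffices to exhibit, for each integral allocation $\mathcal{A}^\pi = (A_i)_{i \in N}$ in the support of our randomized allocation, positive prices $(p_g)_{g \in M}$ and budgets $(B_i)_{i \in N}$ under which $\mathcal{A}^\pi$ is a competitive equilibrium (CE); that is, every good in $A_i$ lies in agent $i$'s maximum-bang-per-buck (MBB) set at prices $p$.

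Concretely, for each good $g$ let $\mu(g) \in N$ denote the unique agent with $g \in A_{\mu(g)}$, and set $p_g := u_{\mu(g)}(g) \in \{a, b\}$ together with $B_i := u_i(A_i) = \sum_{g \in A_i} p_g$. Under this choice, every received good $g \in A_i$ has bang-per-buck $u_i(g)/p_g = 1$, and any unreceived good $g$ with $u_i(g) = a$ contributes at most $1$. Since $u_i(g)/p_g \in \{a/b,\,1,\,b/a\}$, the pair $(p, B)$ is a CE if and only if no pair $(i, g)$ simultaneously satisfies $u_i(g) = b$ and $p_g = a$. Unfolding, this is the \emph{coverage invariant}: every good $g$ that is valued at $b$ by some agent must be assigned to an owner who also values $g$ at $b$.

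I would then verify the invariant by walking through \Cref{alg:efx_fpo}. Throughout the main loop (\crefrange{alg_efx_fpo_whileloop1_begin}{alg_efx_fpo_whileloop1_end}), every newly placed assignment uses a $b$-edge in the bipartite graph $H = (N, M, \{(i,g): u_i(g) = b\})$: the $|Z_t|=0$ branch inserts a $b$-perfect matching (\cref{alg_efx_fpo_case1_end}), while the $|Z_t|>0$ branch matches $N\setminus Z_t$ into $M\setminus \Gamma(Z_t)$ at $b$-value (\crefrange{alg_efx_fpo_case2_perfect_begin}{alg_efx_fpo_case2_end}) and pushes $b$-augmenting paths from $Z_t$ into $\Gamma(Z_t)$ (\crefrange{alg_efx_fpo_case2_begin}{alg_efx_fpo_case2_quiet}). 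Consequently, once the main loop terminates, every good that has so far entered some $A_i$ is already a $b$-match. The only remaining step is the final augmentation pass (\crefrange{alg_efx_fpo_rem_begin}{alg_efx_fpo_rem_end}) followed by the cleanup (\cref{alg_efx_fpo_clean}), which distributes the residual supply in $M$ according to the counters $c_i$ accumulated during the run.

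The heart of the proof is therefore to show that every good handed out in \cref{alg_efx_fpo_clean} is small for \emph{every} agent. Suppose for contradiction that a residual good $g$ satisfies $u_j(g) = b$ for some $j$. Then $(j, g)$ is an unused $b$-edge; since in the ``many-to-one'' $b$-matching recorded by $\mathcal{A}^\pi$ each good has degree at most $1$ and each agent has no capacity bound, any unmatched $b$-valued good is directly augmentable, so some augmentation step should have captured it. I would argue this in two cases: either $g$ belonged to some $\Gamma(Z_t)$ and $j\in Z_t$, in which case the within-round augmenting pass (\crefrange{alg_efx_fpo_case2_begin}{alg_efx_fpo_case2_quiet}) would have matched $g$; or the match opportunity survived to the terminal pass, in which case \crefrange{alg_efx_fpo_rem_begin}{alg_efx_fpo_rem_end} would have absorbed it into $\mathcal{A}^\pi$. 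The delicate step, which I expect to be the main obstacle, is ruling out a $b$-edge $(j, g)$ whose only augmenting route crosses a deleted cut $(Z_t, N\setminus Z_t)$ or terminates at a frozen agent at the terminal pass. Here the minimality of $Z_t$ (Hall-tightness $|\Gamma(Z_t)| < |Z_t|$) combined with the swap preprocessing of \crefrange{alg_efx_fpo_whileloop2_begin}{alg_efx_fpo_whileloop2_end}---which forces no $g\in A_i$ with $i\in N\setminus Z_t$ to be $b$-valued by any $j\in Z_t$---prevents augmenting paths from crossing the cut, while the \cnt/$c_i$ bookkeeping ensures the terminal pass services enough unfrozen agents to exhaust any remaining $b$-edge. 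An induction on the round index $t$ that leverages the structure of minimal unmatchable groups from \Cref{append.nagentmixed1} should close this argument and thereby establish the coverage invariant, yielding the Fisher-market certificate and hence fPO.
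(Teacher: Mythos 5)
Your overall framework (certify fPO by exhibiting a Fisher-market equilibrium and invoking the First Welfare Theorem) is the same as the paper's, but the specific certificate you propose rests on a false claim. Your ``coverage invariant'' --- every good valued at $b$ by some agent ends up with an owner who values it at $b$ --- does not hold for the output of \Cref{alg:efx_fpo}, so the owner-valuation prices $p_g = u_{\mu(g)}(g)$ need not form an equilibrium. Concretely, take $n=3$, $b=2$, $a=1$, goods $g_1,\dots,g_6$, where agent~$1$ values all six goods at $b$, agent~$2$ values only $g_1$ at $b$, and agent~$3$ values only $g_2$ at $b$. Round~1 has a perfect matching (say $1\!-\!g_3$, $2\!-\!g_1$, $3\!-\!g_2$); in round~2 the minimal unmatchable group is $Z_2=\{2,3\}$ with $\Gamma(Z_2)=\emptyset$, so agents $2,3$ turn quiet with $c_2=c_3=1$, agent~$1$ receives $g_4$, and $\cnt=2$. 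The loop then terminates with $M=\{g_5,g_6\}$, the final pass services the first $|M|-\cnt=0$ unfrozen agents (i.e., nobody), and the cleanup step (\cref{alg_efx_fpo_clean}) hands $g_5,g_6$ to agents $2$ and $3$, who value them at $a$, even though agent~$1$ values them at $b$. The allocation is still fPO, but under your prices agent~$1$ holds goods of bang-per-buck $1$ while $g_5,g_6$ have bang-per-buck $b/a$, so your pair $(p,B)$ is not a competitive equilibrium, and the induction you plan (``every good handed out in \cref{alg_efx_fpo_clean} is small for every agent'') cannot be carried out. The obstruction is exactly the one you flag as ``delicate'': the final pass only lets the first $|M|-\cnt$ unfrozen agents augment, so unmatched $b$-edges incident to the remaining agents can and do survive.

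This is precisely why the paper's proof does not use owner valuations. It prices bundles of agents in the groups $Z_t$ and of those $N^r$-agents holding both large and small goods at owner value, propagates prices along an iterative process, and prices the leftover bundles at $a$ even though their owners value them at $b$ (the ``low-price agents''). Under that scheme an agent like agent~$1$ above has MBB ratio $b/a$, which her own (large-valued, $a$-priced) goods attain, so equilibrium survives the failure of your invariant; the only genuinely dangerous configuration --- a priced agent valuing an $a$-priced good at $b$ --- is excluded by a backward-iteration argument producing an augmenting path that would have been used at \cref{alg_efx_fpo_rem_aug}. To repair your proof you would need to replace the uniform owner-valuation pricing by something of this kind; as written, the central step fails.
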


\begin{proof}
    To leverage Theorem~\ref{thm:fisher_market} for proving this, our goal is to find a proper price vector $\bm{p}$ for each realized integral allocation $\mathcal{A}$ such that $(\mathcal{A},\bm{p})$ is a market equilibrium of a Fisher market.
    Here, we can directly set the budget $e_i$ for each agent $i\in N$ as $\sum_{g\in A_i}p_g$ to ensure the market clear. It suffices to find $\bm{p}$ and show that each agent $i\in N$ only receives goods in $\rm{MBB}_i$.

    Denote the set of agents not in any previous $Z_t$ by $N^r$.
    We assume $m\ge n$ here and the analysis for the corner case when $m<n$ is deferred to Appendix.
    Define the price vector $\bf{p}$ as follows:
    \begin{itemize}
        \item for each agent~$i$ in some $Z_t$, we set $p_g=u_i(g)$ for $g\in A_i$;
        \item for each agent $i\in N^r$ who receives both large and small goods, set $p_g=u_i(g)$ for each $g\in A_i$.
    \end{itemize}
    For the prices of the remaining goods, we will set them iteratively: if there exists one agent $i$ without pricing her goods such that some agent who has priced her goods before values some good in $A_i$ at $b$, we set $p_g=u_i(g)$ for each $g\in A_i$.
    If there is no further pricing that can be made, we set the price as $a$ for each remaining good and we call each agent who owns these goods \emph{low-price agent}.
    We then need to show each agent $i\in N$ only receives goods in $\rm{MBB}_i$.

    For each agent $i$ in some $Z_t$, the ratio $u_i(g)/p_g$ for each good $g\in A_i$ is exactly $1$.
    For each good $g$ which is allocated to some agent $j$ in a $Z_{t'\le t}$, if $p_g=u_j(g)=a$, this good is \emph{actually} allocated at line \ref{alg_efx_fpo_clean} and $g$ is in $M^{r_i}\setminus\Gamma(Z_t)$, which is valued $a$ by $i$ from the second term in Observation~\ref{obs:efx_fpo}, whose ratio $u_i(g)/p_g$ is at most $1$.
    For each good $g$ which is allocated to some agent $j$ in $N^{r_i}\setminus Z_t$, we also have $u_i(g)=a$ from Observation~\ref{obs:efx_fpo}.
    Thus, each agent $i$ in some $Z_t$ only receives goods in $\rm{MBB}_i$.

    For each low-price agent, from the third term in Observation~\ref{obs:efx_fpo}, all goods allocated to her in the multi-round procedure must be large.
    Because this agent is not priced before the end of the iteration, she receives no small goods.
    Since the goods she owns are priced at $a$ which is the lowest price we set and all goods in her bundle must be valued at $b$ under her valuation, all goods in her bundle must be in $\rm{MBB}_i$.

    We then come to some agent $i\in N^r$ who sets $p_g=u_i(g)$ for each $g\in A_i$. The ratio $u_i(g)/p_g$ for each good $g\in A_i$ is exactly $1$.
    If she values a good $g$ in some $A_j,j\in N$ at $b$ (thus $j$ cannot be a low-price agent otherwise it will be priced before the end from the iteration) and $u_j(g)=p_g=a$, we can backward the iteration from agent $i$ and finally achieve an agent $i'\in N^r$ who receives both large and small goods from her perspective. This backward path along with the good presented in the condition of the iteration is a feasible augmenting path at line \ref{alg_efx_fpo_rem_aug} and agent $i'$ would not receive a small good, which leads to a contradiction. This completes the proof for the case when $m\ge n$.
\end{proof}

We are now ready to prove our main result in this section.

\begin{proof}[Proof of \Cref{thm:efx_fpo}]
    From Lemmas~\ref{lem:efx_fpo_expost_efx},\ref{lem:efx_fpo_exante_ef} and \ref{lem:efx_fpo_expost_fpo}, it suffices to show Algorithm~\ref{alg:efx_fpo} can terminate in polynomial time.
    From the description of our algorithm, since $|M|$ keeps decreasing which bounds the total number of rounds by $m$ and the only thing we need is to show  lines~\ref{alg_efx_fpo_whileloop2_begin}-\ref{alg_efx_fpo_whileloop2_end} can be terminated in polynomial time.
    From the construction of $Z_t$ according to Algorithm~\ref{alg:find_unmatchable_group}, if we keep the perfect matching between $N\setminus Z_t$ and $M\setminus \Gamma(Z_t)$ by just replacing the good $g'$ by $g$, one edge from $j\in Z_t$ to $g'$ is added, this leads to one of the following cases: either (1) one augmenting path can be performed and the size of the maximum matching between $N$ and $M$ is increased by $1$, or (2) there is no additional augmenting path but $|\Gamma(Z_t)|$ is increased by at least $1$.
    After at most $n$ times of the case (2) occurs, case (1) will occur.
    From the total number of the occurrences of (1) is at most $n$, lines~\ref{alg_efx_fpo_whileloop2_begin}-\ref{alg_efx_fpo_whileloop2_end} can be terminated in $O(n^2)$ loops.
\end{proof}

\section{Conclusion}

In this paper, we have studied the best-of-both-worlds fairness for indivisible-goods and mixed-goods allocations.
With indivisible goods, we provided polynomial-time algorithms that achieve \exan EF and \expo EFX allocation for two agents and \exan EF, \expo EFX, and \expo fPO for $n$ agents with bi-valued utilities.
With mixed goods, we showed polynomial-time algorithms that achieve \exan EF and \expo EFM allocation for two agents and \exan PROP and \expo EFM for $n$ agents with bi-valued utilities.

In future research, it would be interesting to further strengthen the results in this paper, the most intriguing of which is perhaps the (in)compatibility between \exan \EF{} and \expo \EFM in the mixed-goods setting.
For mixed goods, another interesting direction is to extend BoBW-fair study to maximin share (MMS) guarantee~\citep{BeiLilu21}.

\section*{Acknowledgements}

The authors would like to thank Jiaxin Song for his great suggestions on this paper.

This work was partially supported by ARC Laureate Project FL200100204 on ``Trustworthy AI'', by the National Natural Science Foundation of China (Grant No.\ 62102117), by the Shenzhen Science and Technology Program (Grant No.\ GXWD20231129111306002), by the Guangdong Basic and Applied Basic Research Foundation (Grant No.\ 2023A1515011188), and by the National Natural Science Foundation of China (No.\ 62102252).

\bibliographystyle{plainnat}
\bibliography{bibliography}

\clearpage
\appendix

\section{Omitted Proofs in Section~\ref{sec:2-agent}}
\label{append:twoagents}

\begin{algorithm}[t]
\caption{An \exan EF and \expo EFX randomized allocation for two agents}
\label{alg:efx_two_agents}
\DontPrintSemicolon

\SetKwFunction{FMain}{\LocalSearch}
\KwIn{Agents~$N = [2]$ and indivisible goods~$M$}

For each~$i \in [2]$, $\mathcal{A}^i \gets$ \FMain{$\emptyset$, $M$, $u_i$}\; \label{alg_two_agents_main_begin}
\If{$\exists i \in [2]$ such that $u_i(A^i_1) = u_i(A^i_2)$ or $u_{3-i}(A^i_1) \geq u_{3-i}(A^i_2)$}{ \label{alg_two_agents_main_ef_begin}
	\Return{$\{(1, \mathcal{A}^i)\}$ where agent~$3-i$ picks her preferred bundle first} \label{alg_two_agents_main_ef_end}
}

\While{$\exists i \in[2]$ such that $u_{3-i}(A^i_2) - u_{3-i}(A^i_1) < u_{3-i}(A^{3-i}_2) - u_{3-i}(A^{3-i}_1)$}{ \label{alg_two_agents_main_update_begin}
	$\mathcal{A}^{3-i} \gets$ \FMain{$A^i_1$, $A^i_2$, $u_{3-i}$}\; \label{alg_two_agents_main_update_ls}
	\If{$\exists j \in [2]$ such that $u_j(A^j_1) = u_j(A^j_2)$ or $u_{3-j}(A^j_1) \geq u_{3-j}(A^j_2)$}{ \label{alg_two_agents_main_ef2}
		\Return{$\{(1, \mathcal{A}^j)\}$ where agent~$3-j$ picks her preferred bundle first} \label{alg_two_agents_main_update_end}
	}
	\If{$\mathcal{A}^{3-i}$ is also EFX under~$u_i$}{ \label{alg_two_agents_main_doublyefx_cond}
		$\mathcal{A}^i \gets \mathcal{A}^{3-i}$\;
		\Return{$\{(0.5, \mathcal{A}^1), (0.5, \mathcal{A}^2)\}$ where agent~$3-i$ picks her preferred bundle first in the realized allocation~$\mathcal{A}^i$ for each~$i \in [2]$}  \label{alg_two_agents_main_doublyefx}}
}

\Return{$\{(0.5, \mathcal{A}^1), (0.5, \mathcal{A}^2)\}$ where agent~$3-i$ picks her preferred bundle first in the realized allocation~$\mathcal{A}^i$ for each~$i \in [2]$} \label{alg_two_agents_main_end}

\BlankLine
\hrule
\BlankLine

\SetKwProg{Fn}{Function}{:}{\KwRet}
\Fn{\FMain{$A$, $B$, $u$}}{ \label{alg_two_agents_localsearch_begin}
	\lIf{$u(A) > u(B)$}{
		Swap~$A$ and~$B$. \label{alg_two_agents_localsearch_swap1}
	}
	\While{$\exists g \in B$ such that $u(A \cup \{g\}) < u(B)$}{ \label{alg_two_agents_localsearch_con1}
            $g \gets \argmax_{g'\in B\text{ and } u(A \cup \{g'\}) < u(B)}u(g')$\; \label{alg_two_agents_localsearch_select}
		$A \gets A \cup \{g\}$, $B \gets B \setminus \{g\}$\; \label{alg_two_agents_localsearch_move}
		\lIf{$u(A) > u(B)$}{ \label{alg_two_agents_localsearch_con2}
			Swap~$A$ and~$B$. \label{alg_two_agents_localsearch_swap}
		}
	}
	\KwRet ($A$, $B$)\; \label{alg_two_agents_localsearch_end}
}
\end{algorithm}

\subsection{Proof of Lemma~\ref{lem:two_agents_localsearch}}
\label{appendix:two_agents_localsearch}

The termination condition ensures that in the output allocation $(A',B')$, we have $u(A') \leq u(B')$ and for every good~$g \in B'$ we have $u(A' \cup \{g\}) \geq u(B')$, which meets the EFX condition.
It suffices to show this subroutine can terminate in polynomial time.

Without loss of generality, we assume $u(A) \leq u(B)$ in the initial allocation~$(A, B)$.
When executing \cref{alg_two_agents_localsearch_move}, we choose the item $g\in B$ with the highest value according to $u$ such that $u(A \cup \{g\}) < u(B)$ and move it from $B$ to $A$.
Before the next swap step, both the utility difference between the two bundles and the size of~$B$ are decreasing.
If the swap step (\cref{alg_two_agents_localsearch_swap}) is never executed, the number of steps is upper bounded by~$m$.
If one swap step (\cref{alg_two_agents_localsearch_swap}) is executed, assume the last item moved from $B$ to $A$ is $g$.
This indicates that, before moving the item $g$ from bundle $B$ to bundle $A$, we have $u(B) - u(A) > u(g)$ and $u(A) + u(g) > u(B) - u(g)$ from the conditions in \cref{alg_two_agents_localsearch_con1,alg_two_agents_localsearch_con2}, so the utility difference between the two bundles decreases from a value larger than~$u(g)$ to a value less than~$u(g)$ after moving the item $g$.
Thus, during the whole subroutine, the utility difference keeps decreasing, and an item $g'$ will not be further moved between the two bundles if $u(g')\ge u(g)$.
The values of the items that have been moved are at least $u(g)$ (due to \cref{alg_two_agents_localsearch_select}), so an item can be moved between the two bundles at most once.
The overall time complexity is bounded by~$O(m\log m)$, where we may first sort the items according to $u$, and then perform the above operations in linear time.

\subsection{Proof of Theorem~\ref{thm:2agentsEFX}}

If the allocation is returned in \cref{alg_two_agents_main_doublyefx}, this is \expo EFX from \Cref{lem:two_agents_localsearch} and the condition in \cref{alg_two_agents_main_doublyefx_cond}, and trivially \exan EF since we only exchange two bundles in these two realizations.
If the allocation is returned in \cref{alg_two_agents_main_ef_end,alg_two_agents_main_update_end}, either the agent~$i$ in \cref{alg_two_agents_main_ef_begin} or the agent~$j$ in \cref{alg_two_agents_main_ef2} treats two bundles equally or the other agent will choose the bundle which is less valued by this agent.
In both cases, this is exactly EF, in both \exan and \expo senses.
If the allocation is returned in \cref{alg_two_agents_main_end} (which also means the conditions in \cref{alg_two_agents_main_ef_begin,alg_two_agents_main_ef2} fail), from \Cref{lem:two_agents_localsearch}, both~$\mathcal{A}^1$ and~$\mathcal{A}^2$ are EFX if for each~$i \in [2]$, agent~$3-i$ picks her preferred bundle first in~$\mathcal{A}^i$.
Because condition in \cref{alg_two_agents_main_update_begin} is violated, for each agent~$i \in [2]$, we have $u_i(A^i_2) - u_i(A^i_1) \leq u_i(A^{3-i}_2) - u_i(A^{3-i}_1)$, so the loss of $u_i(A^i_2) - u_i(A^i_1)$ for agent~$i$ under~$\mathcal{A}^i$ is upper bounded by the gain of $u_i(A^{3-i}_2) - u_i(A^{3-i}_1)$ under~$\mathcal{A}^{3-i}$.
Thus, this randomized allocation is \exan EF and \expo EFX.
It suffices to show that this algorithm can terminate in polynomial steps, where we adopt a similar analysis.

From the condition at \cref{alg_two_agents_main_update_begin} and the monotonic property in the proof of \Cref{lem:two_agents_localsearch}, after executing \cref{alg_two_agents_main_update_ls}, the utility difference between the two bundles in~$\mathcal{A}^{3-i}$ under~$u_{3-i}$ decreases (to see this, \Cref{lem:two_agents_localsearch} implies that the utility difference between the two bundles in the updated $\mathcal{A}^{3-i}$ is weakly smaller than the utility difference of the two bundles in $\mathcal{A}^i$ under $u_{3-i}$, and the condition at \cref{alg_two_agents_main_update_begin} implies that the utility difference between the two bundles in~$\mathcal{A}^{3-i}$ strictly decreases after executing \cref{alg_two_agents_main_update_ls}).
We first notice that \cref{alg_two_agents_localsearch_swap1} cannot be executed in this call of the subroutine because of the failure of the conditions at \cref{alg_two_agents_main_ef_begin} or the previously executed \cref{alg_two_agents_main_ef2}.
If there is no swap step (\cref{alg_two_agents_localsearch_swap}) executed in this call of the subroutine, since the original allocation $(A^i_1,A^i_2)$ is EFX under $u_i$, either $u_i(A^{3-i}_1)\ge u_i(A^{3-i}_2)$ and then meets the condition at \cref{alg_two_agents_main_ef2}, or the allocation $\mathcal{A}^{3-i}$ is also EFX for agent~$i$.
In both cases, the algorithm terminates.
Thus, we then assume some swap step is executed in this call of subroutine, which means that when the item last transferred from one bundle to another under $u_{3-i}$ is $g$, the difference between the utilities of two bundles in $\mathcal{A}^{3-i}$ under $u_{3-i}$ is from a value larger than $u_{3-i}(g)$ to a value less than $u_{3-i}(g)$.

The difference for each $\mathcal{A}^i,i\in[2]$ under $u_i$ is decreasing during the whole process before terminating.
Similar to the analysis in Appendix~\ref{appendix:two_agents_localsearch}, each item that has been moved between the two bundles will not be further moved by the same agent.
Thus, the total number of move operations is bounded by $2m$.
The overall time complexity is~$O(m\log m)$, where we may first sort the items according to $u_1$ and $u_2$ respectively in $O(m\log m)$, and then perform the above operations in linear time.
For the execution of line \ref{alg_two_agents_main_doublyefx_cond}, we can use a heap to maintain the item with the smallest utility in $O(\log m)$ time per operation. Since the total number of executions of this line can also be bounded by the total number of move operations, which is $O(m)$, the overall time complexity for this line remains $O(m\log m)$.

\subsection{Proof of Theorem~\ref{thm:2agentsEFM}}

We analyze the procedure described before the statement of this theorem.
In the first case where a single integral allocation is given (\cref{alg_two_agents_main_ef_end,alg_two_agents_main_update_end}), this is an EF allocation and thus \exan EF and \expo EFM.
We then assume there exist two allocations~$\mathcal{A}^1$ and~$\mathcal{A}^2$ where $\mathcal{A}^i$ is EFX under~$u_i$ for each~$i \in [2]$.
If there exists an agent~$i \in [2]$ such that the good~$d$ is in bundle~$A^i_2$, from the \expo EFX property, we have $u_i(A^i_2) \geq u_i(A^i_1)$ and $u_i(A^i_2) - u_i(d) \le u_i(A^i_1)$, thus there exists an~$\alpha \in [0, 1]$ such that agent~$i$ treats these two bundles equally after transferring an $\alpha$-fraction of the divisible good~$d$.
Thus, let agent~$3-i$ pick her preferred bundle first and we achieve an EF allocation.

For the last case where the good~$d$ is in the bundle with smaller utility for both agents, since each integral allocation in the support is EFX and there is no divisible good in the better bundle, this can also imply EFM.

\section{Omitted Details in Section~\ref{sec:n-agent-mixed}}
\label{append:nagentmixed}

In this section, we provide the remaining details for the proof of \Cref{thm:prop_efm}.

\subsection{Finding the Minimal Unmatchable Group}
\label{append.nagentmixed1}

In this section, we define the notion of \emph{minimal unmatchable group} of agents and describe an algorithm for finding such a group.
These will be used in the next two sections.

A bipartite graph $G = (U, V, E)$ is constructed, where~$U$ denotes the set of agents, $V$ denotes the set of indivisible goods, and $(i, g) \in E$ if $u_i(g) = b$.
We use~$\Gamma(S)$ for $S \subseteq U$ to denote the set of neighbors of~$S$ in~$G$.
Given a bipartite graph and a matching, an alternating path and an augmenting path are defined as follows.

\begin{definition}[Alternating path in a bipartite graph]
Given a matching in a bipartite graph, an \emph{alternating path} is a path that begins with an unmatched agent or an unmatched good and in which the edges belong alternately to the matching and not to the matching.
\end{definition}

\begin{definition}[Augmenting path in a bipartite graph]
An \emph{augmenting path} is an alternating path starting from an unmatched agent (resp., good) and terminating at an unmatched good (resp., agent).
\end{definition}

\begin{definition}[Perfect matching in a bipartite graph]
Given a bipartite graph~$G$, a \emph{perfect matching} with respect to a set of agents~$U' \subseteq U$ and a set of goods~$V' \subseteq V$ is a matching that matches each agent in~$U'$ to a unique good in~$V'$.
\end{definition}

Based on the bipartite graph~$G$, we define the \emph{minimal unmatchable group} in Definition~\ref{def:minimal_unmatchable}.

\begin{definition}[Minimal unmatchable group]
\label{def:minimal_unmatchable}
A group of agents~$Z$ is said to be a \emph{minimal unmatchable group} if $G = (U, V, E)$ contains a perfect matching between $U \setminus Z$ and $V \setminus \Gamma(Z)$ and there is no subset $S \subseteq Z$ such that
\begin{itemize}
\item $S$ has a perfect matching in the bipartite graph, and
\item there is no edge between each agent $i\in Z\setminus S$ and each good matched to $S$.
\end{itemize}
\end{definition}

\begin{algorithm}[t]
\caption{Finding minimal unmatchable group}
\label{alg:find_unmatchable_group}
\DontPrintSemicolon

\KwIn{Agents~$N$, indivisible goods~$M$, and agents' utility functions.}

Construct a bipartite graph $G = (N, M, E)$ where $(i, g) \in E$ if $u_i(g) = b$.\;
Find a maximum matching on~$G$.\;
Let $Z$ be the set of all reachable agents from all unmatched agents through alternating paths.\;

\Return{The minimal unmatchable group~$Z$}
\end{algorithm}

Intuitively, for each agent belonging to the minimal unmatchable group $Z$, her value to each item except $\Gamma(Z)$ is $a$.
If there exists a perfect matching for all agents, $Z = \emptyset$.

A straightforward algorithm to find the minimal unmatchable group is to iteratively add the minimal set of agents that violates Hall's condition into the group.
We also provide an alternative approach in \Cref{alg:find_unmatchable_group}.

\subsection{Reduction to $m \leq 2n-2$}
\label{append.nagentmixed2}

Our first step to handle the general case is to allocate the indivisible goods so that there are at most~$2n-2$ indivisible goods that remain unallocated after this step, and the partial integral allocation satisfies envy-freeness.
The pseudocode is shown in \Cref{alg:reduce_indivisible_number}.

\begin{algorithm}[t]
\caption{Partially allocating indivisible goods}
\label{alg:reduce_indivisible_number}
\DontPrintSemicolon

\KwIn{Agents~$N$, indivisible goods~$M$, and agents' utility functions.}

$\forall i \in [n]$, $A_i' \gets \emptyset$\;
Let $T \gets \emptyset$ be the set of agents that do not have a perfect matching.\;
Let $Y \gets \emptyset$ be $\Gamma(T)$ among the unallocated goods. \;
$\cnt \gets 0$, $t \gets 1$\;

\While{$|M|\ge \cnt + n$\label{line:prop_while_condition}}{
	Find the minimal unmatchable group~$Z_t$ among $N \setminus T$ and $M$ according to \Cref{alg:find_unmatchable_group}.\;
	$T \gets T \cup Z_t$, $Y \gets Y \cup \Gamma(Z_t)$\;
	$M \gets M \setminus \Gamma(Z_t)$ and update the bipartite graph.\;
	\lIf{$|M|< \cnt + n$\label{line:prop_if_condition}}{
		\Break
	}
	Find a perfect matching between~$N \setminus T$ and~$M$, and assume each agent~$i \in N \setminus T$ is matched to~$g_{i'} \in M$.\;
	$\forall i \in N \setminus T$, $A_i \gets A_i \cup \{g_{i'}\}$\;
	Remove the matched items from~$M$.\;
	Update the bipartite graph.\;
	$\cnt \gets \cnt + |T|$, $t \gets t+1$\;
}
$t \gets t-1$\;

\ForEach{$i \in [n]$}{ \label{line:prop_alloc_cnt}
	Arbitrarily allocate~$t - |A_i|$ goods from~$M$ to~$A_i$, and remove the allocated goods from~$M$.\;
}

Let $V\leftarrow M$\;
\Return{Partial allocation~$\mathcal{A}'$, and unallocated indivisible goods~$Y \cup V$}
\end{algorithm}

In each iteration of the algorithm, we run \Cref{alg:find_unmatchable_group} to find the minimal unmatchable group~$Z$.
If $Z = \emptyset$, we directly find a perfect matching for all agents and allocate the goods accordingly.
Otherwise, $Z$ and~$\Gamma(Z)$ (i.e., the neighbors of~$Z$) are removed from the bipartite graph.
We find a perfect matching between the remaining agents and the remaining goods, which, by the definition of the minimal unmatchable group, is guaranteed to exist.
During the process, a variable~$\cnt$ is maintained to indicate how many additional indivisible goods should be allocated after each iteration to ensure that each bundle contains the same number of indivisible goods.
The termination condition of the \verb|while|-loop in \cref{line:prop_while_condition,line:prop_if_condition} ensures that the goods will not run out.
After the number of goods is not enough to guarantee the same size of each bundle for the next iteration, we terminate the process and obtain a partial allocation.

In the following part, we use $\mathcal{A}' = (A_1',\ldots, A_n')$ to denote this partial allocation, $T$ to denote the set of all the unmatchable agents, and $Y$ to denote the neighbors of $T$ that are removed from $M$ during the algorithm.

\begin{lemma}
$|Y \cup V| \leq 2n-2$.
\end{lemma}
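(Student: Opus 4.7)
The plan is to decompose $|Y \cup V|$ as $|Y| + |V|$ (since $Y$ and $V$ are disjoint: goods removed from $M$ into $Y$ as some $\Gamma(Z_t)$ never return to $M$, while $V$ is the residual of $M$ after the allocation in \cref{line:prop_alloc_cnt}) and bound each summand by $n-1$.

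For $|Y|$, the key observation is that each nonempty minimal unmatchable group $Z_t$ violates Hall's condition, so $|\Gamma(Z_t)| \leq |Z_t| - 1$ at the moment $Z_t$ is identified (as stressed in the technical overview of \Cref{sec:n-agent-mixed}). Since the neighborhoods added to $Y$ across different iterations are disjoint---previously deleted goods are no longer in the ambient bipartite graph, so they cannot be double-counted---we get
\[
|Y| \;=\; \sum_{t\,:\,Z_t \neq \emptyset} |\Gamma(Z_t)| \;\leq\; \sum_{t\,:\,Z_t \neq \emptyset} (|Z_t| - 1) \;\leq\; |T| - k',
\]
where $k'$ counts the iterations with a nonempty minimal unmatchable group. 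Since $T \subseteq N$ gives $|T| \leq n$, we obtain $|Y| \leq n-1$ when $k' \geq 1$; if $k' = 0$ then $Y = \emptyset$ trivially.

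For $|V|$, I would invoke the termination condition. The while-loop exits only when $|M| < \cnt + n$, either at the top guard in \cref{line:prop_while_condition} or at the inner break in \cref{line:prop_if_condition} (after $\Gamma(Z_t)$ has just been removed from $M$). The crucial invariant is that the value of $\cnt$ immediately before the foreach-loop at \cref{line:prop_alloc_cnt} equals exactly the total number of extra goods this loop is about to distribute: an agent first joining $T$ in iteration $i$ has been matched $i-1$ times (once in each earlier iteration) and, after the decrement $t \leftarrow t-1$, she needs $t - (i-1)$ more goods to reach the common bundle size $t$. Summing these deficits over all agents and telescoping against the successive updates $\cnt \leftarrow \cnt + |T|$ yields the identity. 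In the inner-break case, the just-identified $Z_t$ contributes zero extras (its agents already have $t$ goods), so the same accounting remains valid. Hence $|V| = |M| - \cnt < n$, i.e., $|V| \leq n-1$.

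Combining the two bounds gives $|Y \cup V| = |Y| + |V| \leq 2(n-1) = 2n-2$. The only real delicacy I foresee is verifying the $\cnt$-versus-extras telescoping cleanly across both loop exit paths; once this invariant is pinned down the rest is pure counting.
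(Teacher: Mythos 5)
Your proposal is correct and follows essentially the same route as the paper's proof: bounding $|Y| \leq n-1$ via the Hall's-condition violation $|\Gamma(Z_t)| < |Z_t|$ for each nonempty minimal unmatchable group, and bounding $|V| \leq n-1$ by combining the loop's termination condition $|M| < \cnt + n$ with the fact that the final allocation step distributes exactly $\cnt$ goods. You merely spell out the $\cnt$-accounting and the disjointness of the removed neighborhoods in more detail than the paper's terse argument.
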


\begin{proof}
For each minimal unmatchable group~$Z$, we have $Z > \Gamma(Z)$; otherwise, Hall's condition is satisfied, and there is a perfect matching between~$Z$ and~$\Gamma(Z)$.
Hence, $|Y| < |T| \leq n$.
Moreover, the number of goods allocated in \cref{line:prop_alloc_cnt} is equal to~$\cnt$.
Hence, the number of goods in~$V$ is less than~$n$ guaranteed by the termination condition of the \verb|while|-loop, as desired.
\end{proof}

\begin{lemma}
\label{lem:prop_efm_partial_ef}
\Cref{alg:reduce_indivisible_number} returns a partial integral allocation that is envy-free.
\end{lemma}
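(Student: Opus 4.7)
The plan is to show that every agent ends up with the same number $r$ of indivisible goods, and then use a structural property of the minimal unmatchable groups to bound pairwise values and rule out envy.

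First, let $r$ be the number of completed iterations of the \verb|while|-loop (the final value of $t$ after the decrement at the end of the loop). For each agent $i$, define $r_i \in \{1, \ldots, r+1\}$ to be the iteration at which $i$ was added to $T$ via some $Z_{r_i}$, setting $r_i = r+1$ if $i$ is never added. By inspection of the algorithm, agent $i$ is matched to a good at every iteration $t'' < r_i$, contributing $\min(r_i - 1, r)$ goods, and the loop at \cref{line:prop_alloc_cnt} then tops each bundle up to exactly $r$ goods.

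The central step, which I expect to be the main obstacle, is the following valuation property: for every agent $i$ with $r_i \leq r$ and every good $g$ allocated at some iteration $t'' \geq r_i$ or in the top-up at \cref{line:prop_alloc_cnt}, $u_i(g) = a$. This follows because at the start of iteration $r_i$ all of $i$'s current neighbors $\Gamma(Z_{r_i})$ are removed from $M$, so after that point $i$ has no neighbor remaining in $M$; as subsequent iterations only shrink $M$ further, every good allocated at iteration $t'' \geq r_i$ or in \cref{line:prop_alloc_cnt} has value $a$ to $i$. By contrast, the $r_i - 1$ goods matched to $i$ at iterations $t'' < r_i$ each have value $b$ to $i$, because edges of the bi-valued bipartite graph correspond exactly to $b$-valued pairs.

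Combining these two facts yields $u_i(A_i) = (r_i - 1) b + (r - r_i + 1) a$ when $i \in T$, and $u_i(A_i) = r b$ otherwise. To bound $u_i(A_j)$ for an arbitrary $j \neq i$ in the case $r_i \leq r$, let $k$ denote the number of goods in $A_j$ that were matched at iterations $t'' < r_i$; clearly $k \leq r_i - 1$. The remaining $r - k$ goods of $A_j$ are allocated either at iterations $\geq r_i$ or in the top-up, so by the valuation property each has value $a$ to $i$. Hence
\[
u_i(A_j) \leq k \cdot b + (r - k) \cdot a \leq (r_i - 1) b + (r - r_i + 1) a = u_i(A_i),
\]
with the second inequality using $b \geq a$ together with $k \leq r_i - 1$. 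The case $r_i = r+1$ is immediate since $u_i(A_i) = r b$ already maximizes the value of any bundle of size $r$. The remaining care is purely bookkeeping---checking that $|A_i| \leq r$ throughout so that the top-up quantity $t - |A_i|$ is nonnegative, and that the top-up at \cref{line:prop_alloc_cnt} never draws from $\bigcup_{t'} \Gamma(Z_{t'})$---both of which follow directly from the fact that $\Gamma(Z_t)$ is removed from $M$ the moment $Z_t$ is formed.
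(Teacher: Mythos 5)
Your proof is correct and follows essentially the same route as the paper's: agents never added to $T$ receive only $b$-valued goods, while an agent added at round $r_i$ values every good allocated from round $r_i$ onward (including the top-up) at $a$, giving $u_i(A_i) = (r_i-1)b + (r-r_i+1)a \geq u_i(A_j)$ since $A_j$ contains at most $r_i-1$ goods allocated before round $r_i$. Your version just makes the counting of pre-$r_i$ goods in $A_j$ and the equal-bundle-size bookkeeping more explicit than the paper does.
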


\begin{proof}
First, \Cref{alg:reduce_indivisible_number} ensures that each agent receives the same number of goods.
For each agent in~$N \setminus T$, envy-freeness is guaranteed as she only receives goods of value~$b$.
For each agent~$i \in T$, assume she is added to~$T$ at round~$t_i$, then each good she receives before round~$t_i$ has value~$b$ to her, and each good allocated to any agent after round~$t_i$ (including~$t_i$) has value~$a$ to her.
Hence, $u_i(A_i) = (t_i - 1) \cdot b + (t - t_i + 1) \cdot a$ while $u_i(A_j) \leq (t_i - 1) \cdot b + (t - t_i + 1) \cdot a$, meaning that agent~$i$ is envy-free in the partial allocation.
\end{proof}

We have already shown that the partial integral allocation~$\mathcal{A}'$ satisfies EF and thus PROP over goods~$M' = \bigcup_{i = 1}^n A'_i$.
We observe that if there is a randomized allocation $\mathcal{A}$ over $(M \setminus M') \cup D$ that satisfies \exan PROP and \expo EFM, then we can construct the allocation over $M$ where each agent $i$ receives bundle $A_i$ deterministically and the corresponding bundle in ${\mathcal{A}}_j$ with probability $p_j$ which satisfies \exan PROP and \expo EFM.
This reduces our problem to the case with at most $2n-2$ indivisible goods.

In the next section, we assume there are~$n$ agents, at most~$2n-2$ indivisible goods, and a single divisible good~$\{d\}$.

\subsection{Allocating Remaining Goods}
\label{append.nagentmixed3}

As mentioned, we have assumed $m \leq 2n-2$.
The case where $m \leq n$ has already been handled in \Cref{sect:mlen}.
Therefore, we assume $n < m \leq 2n-2$ below.

\paragraph{Step~1:}
We run \Cref{alg:find_unmatchable_group} on~$N$ and~$M$ to find the minimal unmatchable group.
Denote the group we find by~$T$ and their neighbors~$\Gamma(T)$ by~$Y$.
This guarantees the existence of a perfect matching between~$N \setminus T$ and~$M \setminus Y$.
We then construct a bipartite graph $G = (N, M, E)$ for the agents and the indivisible goods where $(i, g) \in E$ if $u_i(g) = b$, and find a random permutation $\pi$ of the agents.
We denote the first~$k$ agents in~$\pi$ by~$\pi[:k]$.

We adopt a similar idea as in Round-Robin.
As $n < m \leq 2n-2$, each agent of~$\pi[:m-n]$ will receive two goods, and the other agents will receive one good.
Meanwhile, we execute some extra operations to ensure the allocation satisfies \exan PROP.
Next, we describe how to allocate the indivisible goods for agents in~$N \setminus T$ (resp.,~$T$) in Step~2 (resp., Step~3).

\paragraph{Step~2:}
First, we decide the allocation of indivisible goods for agents in~$N \setminus T$.
Consider the induced subgraph $G'=(N\setminus T, M\setminus Y, E')$.

\emph{First phase:}
Find a perfect one-to-one matching in~$G'$ and allocate the goods accordingly.
Note that the matching is fixed regardless of $\pi$.

\emph{Second phase:} Following the order in $\pi$, iteratively let each agent of $\pi[:m-n] \cap (N \setminus T)$ receive an unallocated good with the highest value.

\paragraph{Step~3:}
Next, we decide the allocation of indivisible goods for agents in~$T$.
Each agent receives exactly one good in the first phase and the remaining goods are handled in the second phase.

\emph{First phase:}
We handle the agents in~$T$ sequentially according to their order in~$\pi$.
Similar to Step~2, let each agent receive a good of value~$b$ if there is still such a good available and add this edge to the matching.
Otherwise, find an augmenting path and update the goods along the path as well as the matching.
If there is no augmenting path, we skip the current agent and let her receive an arbitrary good after finishing executing the aforementioned process to all of the agents in~$T$.

\emph{Second phase:}
After each agent in~$T$ receives a good, we further allocate an unallocated indivisible good of $M$ to each of $\pi[:m-n] \cap T$.

\paragraph{Step~4:}
Execute the water-filling process for divisible goods and obtain the allocation~$\mathcal{A}^\pi$.
The randomized allocation we obtain is $\{(\frac{1}{n!}, \mathcal{A}^{\pi})\}$ and is denoted by~$\mathcal{R}$.

\medskip
In what follows, we show that the randomized allocation~$\mathcal{R}$ is \exan PROP, and consider the agents in~$T$ and in~$N \setminus T$ in \Cref{lem:prop_for_x,lem:prop_for_n_setminus_x} separately and respectively.
We start with a \namecref{lem:step_4} which will be used in the proof of \Cref{lem:prop_for_x}.

\begin{lemma}
\label{lem:step_4}
The indivisible goods in~$Y$ are fully allocated after the first phase in Step~3.
\end{lemma}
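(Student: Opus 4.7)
The plan is to recast Step 3's first phase as the classical augmenting-path algorithm for maximum bipartite matching, running on the subgraph $G' := (N, M \setminus X_2, E)$ of value-$b$ edges among goods still available after Step 2. Let $X_2$ denote the set of goods distributed as extras in Step 2's second phase, and write $X_2^Y := X_2 \cap Y$. The plan then reduces to showing that any maximum matching of $G'$ must saturate $Y \setminus X_2^Y$; combined with the already-allocated $X_2^Y$, this yields the full allocation of $Y$.

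The heart of the argument is the structural claim that in the bipartite graph $(T, Y, E)$ there is a matching saturating $Y$. I would prove this by contradiction from the minimality of $T$ in the sense of \Cref{def:minimal_unmatchable}. If no such matching exists, then Hall's condition on the $Y$-side fails, so some $\tilde Y \subseteq Y$ has $|N_T(\tilde Y)| < |\tilde Y|$; taking $\tilde Y$ of minimum size forces $|N_T(\tilde Y)| = |\tilde Y| - 1$. Setting $S := N_T(\tilde Y) \subsetneq T$, the minimality of $\tilde Y$ controls the deficits of all proper subsets of $\tilde Y$, which via the defect form of Hall's theorem on $(S, \tilde Y, E)$ yields a perfect matching of $S$ into $\tilde Y$. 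By construction, every agent of $T \setminus S$ has no neighbor in $\tilde Y$, so $\Gamma(T \setminus S) \subseteq Y \setminus \tilde Y$ is disjoint from the matched goods. Thus $S$ satisfies both conditions forbidden by minimality, a contradiction.

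To conclude, any restriction of the $Y$-saturating matching to $Y \setminus X_2^Y$ saturates that set from within $T$; combined with Step 2's first-phase matching of $N \setminus T$ into $M_1 \subseteq M \setminus Y$ (which is disjoint from $X_2$), I obtain a matching of $G'$ of size $|N \setminus T| + |Y \setminus X_2^Y|$. Since Step 3's first phase is precisely the augmenting-path procedure initialized with the Step 2 phase-one matching, its output is a maximum matching of $G'$, so has size at least $|N \setminus T| + |Y \setminus X_2^Y|$. Because each $T$-agent has neighbors only in $Y$, and only $Y \setminus X_2^Y$ among those goods is available, counting forces every good in $Y \setminus X_2^Y$ to be saturated. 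Together with the pre-allocated $X_2^Y$, every good in $Y$ is allocated by the end of Step 3's first phase. The main subtlety I expect is the bookkeeping around Step 2's second-phase extras, which sit outside the running matching and whose goods are unavailable for augmenting-path moves in Step 3; this is exactly why I restrict the maximum-matching analysis to $G'$ from the outset, so that removing $X_2^Y$ from the $Y$-side still leaves the structural consequence of $T$'s minimality intact.
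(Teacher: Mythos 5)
Your proposal is correct, but it reaches the conclusion by a genuinely different route than the paper. The paper argues directly on the algorithm's terminal state: assuming some $g \in Y$ is still unallocated after the first phase of Step~3, it collects all alternating paths from $g$ in the Step-3 matching on $(T, Y)$, observes that the reachable agents $T_1$ must all be matched (otherwise an augmenting path would have been applied when the relevant agent was processed) and that agents in $T \setminus T_1$ have no value-$b$ edge to the goods matched to $T_1$, so $T_1$ itself furnishes the subset forbidden by \Cref{def:minimal_unmatchable}. You instead first establish a standalone structural lemma---minimality of $T$ implies the bipartite graph $(T, \Gamma(T))$ admits a matching saturating all of $Y$---by taking a minimum-size Hall violator $\tilde Y$ on the goods side, deducing $|N_T(\tilde Y)| = |\tilde Y| - 1$, and applying the defect form of Hall's theorem to saturate $S = N_T(\tilde Y)$, which again contradicts \Cref{def:minimal_unmatchable}; you then convert this into the lemma via the maximality of the matching produced by the incremental augmenting-path procedure of Step~3 and a counting argument, with explicit bookkeeping of the goods of $Y$ already consumed as ``extras'' in Step~2's second phase. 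Both arguments ultimately rest on the same two ingredients (the minimality clause of the definition, and the standard persistence fact that an agent skipped for lack of an augmenting path never acquires one after later augmentations---you need it to claim the output matching is maximum, the paper needs it to claim every agent in $T_1$ is matched), and both correctly treat the Step-3 matching as confined to $(T, Y)$-edges. What your route buys is a reusable, algorithm-independent property of minimal unmatchable groups (the goods side $\Gamma(T)$ is always saturable from $T$) and an explicit treatment of the $Y$-goods taken in Step~2, a bookkeeping point the paper's proof passes over silently; what the paper's route buys is brevity, since it avoids the global maximum-matching and counting step by extracting the violating subset directly from the terminal matching. Minor polish points: state explicitly that $S = N_T(\tilde Y) \neq \emptyset$ (every good in $\Gamma(T)$ has a neighbor in $T$) and that $|Y| < |T|$ gives $S \subsetneq T$, so the contradiction with the definition is legitimate.
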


\begin{proof}
Consider the induced subgraph $G^* = (T, Y, E^*)$.
Assume there exists a good~$g \in Y$ being unallocated after the first phase.
We collect all the alternating paths starting from~$g$ and denote the set of vertices belonging to~$T$ on the paths as~$T_1$, where $T_1 \neq \emptyset$ since~$g \in \Gamma(T)$.
We argue that each agent in~$T_1$ is matched in~$G^*$; otherwise, there is an augmenting path, and the allocation should have been updated during the first phase of Step~3.
Every agent in~$T \setminus T_1$ values each good matched to agents~$T_1$ at~$a$ as we have already found all the alternating paths starting from~$g$.
Let $S \subsetneq T_1$ and $S \neq \emptyset$.
Then, $S$ is an evidence demonstrating that~$T$ is not the minimal unmatchable group (see \Cref{def:minimal_unmatchable}), a contradiction.
\end{proof}

\begin{lemma}
\label{lem:prop_for_x}
The randomized allocation~$\mathcal{R}$ is ex-ante PROP for agents in~$T$.
\end{lemma}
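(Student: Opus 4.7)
The plan is to apply the baseline comparison technique used in the proof of \Cref{thm:prop_efm_m_le_n}. Fix an agent $i \in T$ and partition the outcomes of the random permutation $\pi$ into $n$ events $\mathcal{E}_1,\dots,\mathcal{E}_n$, each with probability $1/n$, where $\mathcal{E}_k$ is the event that $i$ is ranked $k$-th in $\pi$. I will construct a baseline allocation $\mathcal{B}^i = (B_1^i,\ldots,B_n^i)$ of $M \cup D$ such that $\sum_{k=1}^n u_i(B_k^i) = u_i(M \cup D)$, and then show that under $\mathcal{E}_k$ agent $i$'s realized utility is at least $u_i(B_k^i)$. Summing over $k$ then yields $\sum_{k=1}^n \Pr(\mathcal{E}_k)\cdot u_i(B_k^i) = u_i(M\cup D)/n$, which is exactly \exan proportionality for $i$.

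The baseline $\mathcal{B}^i$ will be obtained in two steps: (i)~partition $M$ by running Round-Robin with respect to $u_i$, which, writing $y_i := |\Gamma(\{i\})|$, produces $n$ bundles in which the first $y_i$ each contain a single value-$b$ good and the first $m-n$ each contain an additional value-$a$ good; (ii)~apply the water-filling process of \Cref{lem:efm} on $d$ under the homogeneous utility profile $(u_i,\dots,u_i)$. By construction $\sum_{k=1}^n u_i(B_k^i) = u_i(M) + u_i(d) = u_i(M\cup D)$.

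The key structural input on the actual allocation consists of two facts. First, by \Cref{lem:step_4}, after Step~3's first phase the set $Y$ is fully matched to agents in $T$; in particular any second-phase indivisible good that $i$ receives lies in $M \setminus Y$ and is hence valued at $a$ by her. Second, by the minimality of $T$ in the sense of \Cref{def:minimal_unmatchable}, no proper subset of $T$ attains the maximum deficiency $|T|-|Y|$; from this one can show, via a short K\"onig/Gallai--Edmonds-style argument, that under the uniformly random $\pi$ each agent of $T$ is matched in Step~3's first phase with probability exactly $|Y|/|T|$.

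With these facts in hand, the inequality $u_i(\text{realized bundle}) \geq u_i(B_k^i)$ will be verified by regime. When $B_k^i$ contains only value-$a$ indivisibles plus possibly some $d$-fraction, the one or two value-$a$ goods that $i$ receives in the actual allocation, together with her water-filled $d$-portion, already cover $u_i(B_k^i)$. The delicate regime is when $B_k^i$ contains a value-$b$ good: since $i$ is matched to such a good only with probability $|Y|/|T|$, the deficit $b-a$ on the unmatched realizations must be absorbed by the $d$-fraction that water-filling assigns to $i$. Here the \EFM property of $\mathcal{A}^\pi$ will be essential---any bundle $i$ strictly envies consists solely of indivisibles with bounded value, and a pigeonhole/counting argument over the realized bundles, analogous to \eqref{eqn:prop_efm_m_le_n}, should show that $i$'s own $d$-mass rises enough to bridge the gap in expectation. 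This last step, converting the \EFM structure of $\mathcal{A}^\pi$ into an expected-$d$ bound that matches $\mathcal{B}^i$, is the main obstacle of the proof.
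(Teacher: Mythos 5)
There is a genuine gap, and it sits exactly where you say the ``main obstacle'' is. The paper's proof does not need (and does not make) any probabilistic statement about who gets matched in Step~3; it establishes the \emph{pointwise} inequality $u_i(A(\pi,k))\geq u_i(B_k^i)$ for every permutation $\pi$ with $\pi(k)=i$, and only then averages over the uniformly random rank. The key deterministic facts are: (i) an agent $i\in T$ has all her value-$b$ goods inside $Y$ and $|Y|<|T|\leq n$, so the baseline bundles each contain at most one $b$-good and the number of $(b+a)$-bundles is maximized by Round-Robin; (ii) if $B_k^i$ contains a $b$-good, then in \emph{every} run in which $i$ is ranked $k$-th she is actually matched to a $b$-good in the augmenting-path phase of Step~3 (she is at position at most $k$ among $T$ and values at least $k$ goods at $b$), so there is no ``deficit on unmatched realizations'' to absorb; and (iii) when $B_k^i$ contains a fraction of $d$, one uses \Cref{lem:step_4} (each agent of $T$ gets at most one good of $Y$, hence no realized bundle contains two goods valued $b$ by $i$) together with EFM and the counting contradiction analogous to \eqref{eqn:prop_efm_m_le_n}, not an expected water-level argument. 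Your plan replaces (ii) and (iii) by an in-expectation bridging step that you explicitly leave open, so the proposal does not yet prove the lemma.

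Moreover, the probabilistic lemma you propose as a substitute is false in general: it is not true that each agent of $T$ is matched in Step~3's first phase with probability exactly $|Y|/|T|$. The matching probabilities depend on the bipartite structure between $T$ and $Y$, not just on the random order; for instance, an agent of $T$ who values every good at $a$ (she can be in $T$ simply by being unmatched in the maximum matching of \Cref{alg:find_unmatchable_group}) is matched with probability $0$, while the remaining agents of $T$ share the goods of $Y$ with correspondingly higher probabilities. So even if one wanted an expectation-based route, this symmetry claim cannot serve as its foundation. The fix is to drop the probabilistic step entirely and argue per permutation as the paper does: the only place randomness enters is the final averaging $\sum_k \frac{1}{n}u_i(B_k^i)=\frac{u_i(M\cup D)}{n}$.
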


\begin{proof}
We define baseline allocation as in \Cref{def:baseline_allocation}, and adopt a similar analysis as in \Cref{thm:prop_efm_m_le_n}.
For any agent~$i \in T$, as there is no edge from~$T$ to~$M \setminus Y$ and $|Y| < |T| \leq n$, the number of goods valued at~$b$ by agent~$i$ is at most~$n$.
Hence, each bundle~$B^i_k$ where~$k \in [n]$ in the baseline allocation~$\mathcal{B}^i$ contains at most one indivisible good of value~$b$ and the number of bundles that contain two indivisible goods with values~$b$ and~$a$ respectively is maximized according to the property of Round-Robin.

We use $A(\pi, k)$ to denote the bundle allocated to the agent ranked $k$-th in the permutation~$\pi$ in the allocation output by our algorithm.
Consider an arbitrary permutation~$\pi$ where agent~$i$ is ranked $k$-th (i.e., $\pi(k) = i$), we will show that $u_i(A(\pi, k)) \geq u_i(B_k^i)$.
Since an agent in $\pi[:m-n]$ receives two goods and otherwise one good in our algorithm, the number of indivisible goods received by agent~$i$ is the same as that in the baseline allocation before the water-filling process with whichever index in~$\pi$.
In addition, she will receive a good with value~$b$ if $B^i_k$ contains one good.
Hence, $u_i(A(\pi, k)) \geq u_i(B_k^i)$ if $B_k^i$ contains only indivisible goods.

If $B^i_k$ contains a fraction of the divisible good~$D$, let $x = u_i(B^i_k)$.
Denote the actual output allocation by $\mathcal{A} = (A_1, \dots, A_n)$, and denote the set of bundles that only contain indivisible goods in the actual allocation~$\mathcal{A}$ by $\mathcal{S}\subseteq \mathcal{A}$.
By \Cref{lem:step_4} and according to the first phase of Step~3, each agent in~$T$ receives at most one good from~$Y$, so no bundle in the actual allocation contains two indivisible goods both valued at~$b$ to agent~$i$.
As the number of bundles with the maximum possible value of indivisible goods (which is $b+a$) is maximized in~$\mathcal{B}^i$, we have $\sum_{A_j\in \mathcal{S}}u_i(A_j)\le \sum_{j=1}^{|\mathcal{S}|}u_i(B_j^i)$, which holds even if we do not consider all the divisible goods in $\bigcup_{j=1}^{|\mathcal{S}|}B_j^i$.
Therefore, if $u_i(A_i)=x' < x$ in the actual allocation~$\mathcal{A}$, we will reach the following contradiction
\begin{align*}
u_i(M \cup D)
&=\sum_{A_j\in\mathcal{S}}u_i(A_j)+\sum_{A_j\notin\mathcal{S}}u_i(A_j)
\le \sum_{A_j\in\mathcal{S}}u_i(A_j)+(n-|\mathcal{S}|)\cdot x' \\
&< \sum_{j=1}^{|\mathcal{S}|}u_i(B_j^i) + (n-|\mathcal{S}|)\cdot x
\le \sum_{j=1}^{|\mathcal{S}|}u_i(B_j^i) + \sum_{j=|\mathcal{S}|+1}^{n}u_i(B_j^i) = u_i(M \cup D),
\end{align*}
where the first inequality holds as each bundle $A_j\notin \mathcal{S}$ that contains divisible goods is valued at most $x'$ to agent $i$ guaranteed by EFM, and the third inequality holds as $u_i(B_j^i)\ge x$ if $B_j^i$ contains no divisible good.

As the permutation is generated uniformly at random, the lemma concludes following the same analysis for \exan PROP as in \Cref{thm:prop_efm_m_le_n}:
\[
u_i(\mathcal{R}) = \sum_{k = 1}^n \sum_{\pi : \pi(k) = i} \frac{u_i(A(\pi, k))}{n!} \geq \sum_{k = 1}^n \sum_{\pi : \pi(k) = i} \frac{u_i(B_k^i)}{n!} = \sum_{k = 1}^n \frac{u_i(B^i_k)}{n} = \frac{u_i(M \cup D)}{n}.
\qedhere
\]
\end{proof}

\begin{lemma}
\label{lem:prop_for_n_setminus_x}
The randomized allocation~$\mathcal{R}$ is ex-ante PROP for agents in~$N \setminus T$.
\end{lemma}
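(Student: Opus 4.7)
My plan is to mirror the baseline comparison technique of \Cref{thm:prop_efm_m_le_n} and \Cref{lem:prop_for_x}, this time specialized to agents in $N \setminus T$. For each such agent $i$, I would define the baseline allocation $\mathcal{B}^i = (B^i_1, \dots, B^i_n)$ by (i) running Round-Robin from $i$'s perspective over all indivisible goods in $M$, so that bundle $B^i_k$ contains her $k$-th most preferred indivisible good and, whenever $k \leq m-n$, additionally her $(k+n)$-th most preferred one, and (ii) executing the water-filling process for the homogeneous profile $(u_i, \dots, u_i)$. As in the earlier cases, every bundle receiving any fraction of $d$ shares a common value $x := u_i(B^i_k)$, and $u_i(B^i_j) \geq x$ for all $j$.

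Fix any permutation $\pi$ with $\pi(k) = i$ and let $A(\pi, k)$ denote the bundle agent $i$ receives. I would prove the per-position inequality $u_i(A(\pi, k)) \geq u_i(B^i_k)$ by case analysis. If $B^i_k$ consists only of indivisible goods, I invoke two facts about our algorithm: the first phase of Step~2 deterministically matches $i$ to a good of value $b$ (because the perfect matching between $N \setminus T$ and $M \setminus Y$ always exists), and, when $k \leq m-n$, the second phase of Step~2 greedily adds the highest-valued unallocated good to $A(\pi, k)$, whose value is at least that of the $(k+n)$-th most preferred good of $i$ in the baseline. If $B^i_k$ contains divisible mass, I argue by contradiction: assuming $u_i(A(\pi, k)) < x$, EFM forces every bundle with any divisibles to be worth less than $x$ to $i$, and the same telescoping inequality used in \Cref{lem:prop_for_x} collapses to $u_i(M \cup D) < u_i(M \cup D)$, provided the concentration bound $\sum_{A_j \in \mathcal{S}} u_i(A_j) \leq \sum_{j=1}^{|\mathcal{S}|} u_i(B^i_j)$ holds, where $\mathcal{S}$ denotes the set of all-indivisible bundles in the realized allocation. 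Averaging the per-position inequality over the uniform random $\pi$ then yields $u_i(\mathcal{R}) \geq \frac{1}{n}\sum_{k=1}^n u_i(B^i_k) = u_i(M \cup D)/n$, exactly ex-ante proportionality.

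The hardest step will be the concentration bound in the divisible case. For agents in $T$, \Cref{lem:step_4} made the analogous bound essentially immediate, since each bundle contained at most one $b$-valued good from $i$'s viewpoint. That shortcut is unavailable for $i \in N \setminus T$: the realized allocation may concentrate two $b$-goods into a single bundle, both through the first-phase perfect matching and through a second-phase greedy pick. To resolve this, I would rely on a rearrangement-style observation. Under the cardinality constraints enforced by the algorithm (exactly $m-n$ bundles of size two and the remaining bundles of size one), Round-Robin from $i$'s viewpoint produces the allocation that maximizes the $u_i$-value of any prefix $B^i_1 \cup \dots \cup B^i_{|\mathcal{S}|}$, because it packs $i$'s highest-valued goods into the earliest bundles. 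Comparing the realized prefix $\bigcup_{A_j \in \mathcal{S}} A_j$ (of cardinality at most $2|\mathcal{S}|$) against $\bigcup_{j=1}^{|\mathcal{S}|} B^i_j$ on the two-valued $\{a, b\}$ scale then reduces the bound to elementary counting on sorted lists, which closes the argument.
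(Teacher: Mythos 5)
Your overall scaffolding---the per-position inequality $u_i(A(\pi,k)) \geq u_i(B^i_k)$, the EFM-based telescoping contradiction when $B^i_k$ contains divisible mass, and averaging over the uniform $\pi$---is the same as the paper's, but the step you yourself identify as the hardest is exactly where your argument breaks. The ``rearrangement-style observation'' is false for the baseline you chose. Under Round-Robin over all of $M$ from $i$'s viewpoint, the prefix $B^i_1 \cup \dots \cup B^i_{k'}$ consists of $i$'s goods of ranks $1,\dots,k'$ together with ranks $n+1,\dots,n+k'$; it skips ranks $k'+1,\dots,n$. Meanwhile a realized all-indivisible bundle of \emph{another} agent can contain two goods both worth $b$ to $i$: one obtained in the phase-1 perfect matching and one grabbed greedily in phase~2 (this is precisely the phenomenon the paper flags in the remark closing \Cref{sect:mlen} and again at the start of its proof of \Cref{lem:prop_for_n_setminus_x}). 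If $i$ values at most $n$ goods at $b$, then $u_i(B^i_1) = b + a$, while such a realized bundle is worth $2b$, so your concentration bound $\sum_{A_j \in \mathcal{S}} u_i(A_j) \leq \sum_{j=1}^{|\mathcal{S}|} u_i(B^i_j)$ already fails with $|\mathcal{S}| = 1$. A concrete instance: $n=4$, $m=6$, $i$ values $g_1,\dots,g_4$ at $b$ and $g_5,g_6$ at $a$; the fixed phase-1 matching gives some agent $j\neq i$ the good $g_2$ and leaves $g_4$ unmatched, and $j$, appearing early in $\pi$, picks $g_4$ greedily in phase~2. Round-Robin does not ``pack $i$'s highest-valued goods into the earliest bundles'' subject to the algorithm's cardinality pattern, because the second good of $B^i_1$ is rank $n+1$, not rank $2$; so your reduction to sorted-list counting does not close the divisible case.

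The paper's proof repairs this by changing the baseline itself (\Cref{def:baseline_allocation_N_setminus_x}): each baseline bundle's \emph{first} good is one of the $n$ goods actually matched in the first phases of Steps~2 and~3 (these are deterministic, independent of $\pi$), the bundles are then sorted in descending order of $u_i$, and only the remaining $m-n$ goods are distributed by Round-Robin from $i$'s viewpoint before water-filling. Because the baseline and every realized allocation then agree on the multiset of first-phase goods, the baseline maximizes (for $i$) the number of bundles containing two $b$-goods and, subject to that, the number containing a $b$- and an $a$-good; this is what makes the prefix inequality $\sum_{A_j \in \mathcal{S}} u_i(A_j) \leq \sum_{j=1}^{|\mathcal{S}|} u_i(B^i_j)$ valid, after which your contradiction and averaging steps go through unchanged. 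So the gap is not the proof architecture but the missing idea of tailoring the baseline to the algorithm's deterministic first-phase matching rather than to $i$'s unconstrained Round-Robin.
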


\begin{proof}
Different from the proof in \Cref{lem:prop_for_x} above, a bundle in the actual allocation~$\mathcal{A}$ may contain two indivisible goods both valued at~$b$ to agent~$i$ for some~$i \in N \setminus T$.
We thus define the baseline allocation for agent~$i$ in a different way below.

\begin{definition}[Baseline allocation~$\mathcal{B}^i$ for agent~$i \in N \setminus T$]
\label{def:baseline_allocation_N_setminus_x}
The \emph{baseline allocation~$\mathcal{B}^i$ for agent~$i \in N \setminus T$} is obtained by
\begin{enumerate}
\item letting each bundle contain one distinct good that is matched during the first phases of Steps~2 and~3 (note that there are~$n$ goods matched during the first phases of Steps~2 and~3, so each bundle can contain exactly one good);
\item arranging the bundles in descending order according to agent~$i$'s utility;
\item letting agent~$i$ allocate remaining indivisible goods by Round-Robin; and
\item executing the water-filling process for the divisible goods according to agent~$i$'s utility.
\end{enumerate}
\end{definition}

Assume that agent~$i$ is ranked $k$-th in the permutation~$\pi$, and denote the actual output allocation by $\mathcal{A}=\{A_1,\ldots,A_n\}$, where $A_i$ is also denoted by $A(\pi,k)$ similar to Lemma~\ref{lem:prop_for_x}.
The number of the indivisible goods in $B_k^i$ equals to that she receives in the actual allocation before the water-filling process.
In the first phase of Step~2, agent $i$ will receive a good with value $b$, which is no less than the value of the first good added to $B_k^i$ (as the value is at most $b$).
If the second good added to $B_k^i$ is also valued at $b$ to agent $i$, then the good she receives in the second phase of Step~2 also has value $b$ by Round-Robin.
Therefore, if $B_k^i$ contains only indivisible goods, the property $u_i(A_i)\ge u_i(B_k^i)$ holds, as agent $i$'s value to her bundle only increases during the water-filling process.

We provide a further observation of the indivisible goods in the two allocations before the water-filling process.
Denoted by $\bar{A}_j$ and $\bar{B}^i_j$ the $j$-th bundles that contain only the indivisible goods before the water-filling process for $j\le n$.
The first good added to $\bar{A}_j$ and $\bar{B}^i_j$ is identical.
Guaranteed by the second and third steps when constructing the baseline allocation, the possible number of bundles with two goods both with value $b$ to agent $i$ is maximized, i.e., there is no actual allocation such that the bundles with two goods with value $b$ is strictly larger than that in the baseline allocation.
Similarly, subject to the above property, the possible number of bundles with two goods with values $a$ and $b$ respectively is maximized.
This allows us to obtain that for any $k'\le n$, $\sum_{j=1}^{k'}u_i(\bar{B}^i_j)\ge \sum_{\bar{A}_j\in\mathcal{S}}u_i(\bar{A}_j)$ for any set $\mathcal{S}$ of bundles where $|\mathcal{S}|=k'$.

We similarly denote $x=u_i(B_k^i)$ if $B_k^i$ contains a fraction of the divisible good.
Denote the set of bundles without any divisible good by $\mathcal{S}$.
From the above analysis, we have $\sum_{j=1}^{k'}u_i(B^i_j)\ge \sum_{\bar{A}_j\in\mathcal{S}}u_i(A_j)$.
Then, if $u_i(A_i)=x'< x$, we similarly obtain the contradiction that
\begin{align*}
u_i(M \cup D)
&=\sum_{A_j\in\mathcal{S}}u_i(A_j)+\sum_{A_j\notin\mathcal{S}}u_i(A_j)
\le \sum_{A_j\in\mathcal{S}}u_i(A_j)+(n-|\mathcal{S}|)\cdot x' \\
&< \sum_{j=1}^{|\mathcal{S}|}u_i(B_j^i) + (n-|\mathcal{S}|)\cdot x
\le \sum_{j=1}^{|\mathcal{S}|}u_i(B_j^i) + \sum_{j=|\mathcal{S}|+1}^{n}u_i(B_j^i) = u_i(M \cup D),
\end{align*}

We have shown that $u_i(A(\pi,k))\geq u_i(B_k^i)$ holds for each $\pi$ and $i$ such that $\pi(k)=i$.
Therefore, for each $\pi$, each agent will receive no less value in the actual allocation than in the baseline allocation.
Following the same analysis in \Cref{lem:prop_for_x}, \exan PROP is guaranteed.
\end{proof}

We are now ready to establish \Cref{thm:prop_efm}.

\begin{proof}[Proof of \Cref{thm:prop_efm}]
\Cref{lem:prop_for_x,lem:prop_for_n_setminus_x} together show the randomized allocation is \exan PROP, and it is also \expo EFM guaranteed by the fact that each partial allocation containing only indivisible goods is EF1 and by Lemma~\ref{lem:efm}.
Finally, it is straightforward that each part of our algorithm runs in polynomial time.
This concludes Theorem~\ref{thm:prop_efm}.
\end{proof}

\begin{remark}
The above algorithm can be generalized to the case where agents have personalized bi-valued utilities (the value $a$ and $b$ of each agent are different) to indivisible goods and no constraint to divisible goods.
\end{remark}

\section{Omitted Details in Section~\ref{sec:n-agent-indivisible}}

\subsection{Proof of \Cref{lem:efx_fpo_expost_efx}}

We can verify the required property after each round and prove this by induction.
The initial empty allocation is trivially EFX.
(Recall that we use~$N^r$ and~$M^r$ to denote the corresponding set~$N$ and~$M$ encountered in \Cref{alg:efx_fpo} at the beginning of its round~$r$.)
At the next round~$r$, if we cannot find a perfect matching between~$N^r$ and goods~$M^r$ (lines~\ref{alg_efx_fpo_case2_begin}-\ref{alg_efx_fpo_case2_end}), there exists a non-empty minimal unmatchable group~$Z_t$.
For each agent~$i \in N^r \setminus Z_t$ at this round, since she can still receive a large good and every agent can receive at most one good per round, there is no envy from agent~$i$.

For each agent~$i \in Z_t$, if she is frozen in this loop, since she will be only frozen for the next $\lfloor b/a-1 \rfloor$ rounds, the total utility accumulated by any other agent's bundle during these frozen rounds is upper bounded by $a + \lfloor b/a-1 \rfloor \cdot a \leq b$, from the second term in Observation~\ref{obs:efx_fpo}.
Thus, each frozen agent will not envy others during the frozen rounds.
If an agent~$i \in Z_t$ is not frozen (in other words, becomes \emph{quiet}), she will not envy the agents not in $Z_t$ or the agents in $Z_t$ without freezing since these agents can only receive a good in $M^r \setminus \Gamma(Z_t)$, which is valued at~$a$ under~$u_i$.
For some agent~$j \in Z_t$ is frozen in this round, if all goods allocated to~$j$ until now are valued at~$b$ by agent~$i$, agent~$i$ will not envy $j$ after removing the last good from the second term in Observation~\ref{obs:efx_fpo}.
Otherwise, agent $i$ will not envy $j$ since there is exactly one small good in $i$'s bundle and there is at least one small good in $j$'s bundle under $u_i$.

For each agent $i$ not in $N^r$, i.e., a quiet agent in a previous $Z_{t'}$, it does not envy the agents not in $Z_{t'}$ and the agents in $Z_{t'}$ without freezing before because all agents receive a small good from $i$'s perspective.
    Compared to some other quiet agent $j$ in $Z_{t'}$ who has been frozen before, after eliminating one small good in $j$'s bundle, we have $b\le a+ \lfloor b/a-1 \rfloor \cdot a +a$, which can eliminate the envy. The term $a+ \lfloor b/a-1 \rfloor \cdot a$ represents the utility during the frozen rounds of $j$ earned by $i$ and the last $a$ represents the utility at this round.

    We then come to the second case where we can find a perfect matching between $N^r$ and goods $M^r$ (Lines~\ref{alg_efx_fpo_case1_begin}-\ref{alg_efx_fpo_case1_end}). For each agent $i$ in $N^r$, since it receives large goods at all rounds, there is no envy from her.
    For each agent $i$ in some previous $Z_t$, the case is similar to the case above.

    From the induction, it suffices to show we can still maintain EFX when allocating the remaining goods at Lines \ref{alg_efx_fpo_rem_begin}-\ref{alg_efx_fpo_rem_end}.
    Because every agent $i$ does not envy agent $j$ except for one special case as stated in the property we maintain, the envy between these agents can be eliminated by removing the good allocated at this final step. The remaining is to show the case where $i$ and $j$ are in the same $Z_t$ and agent $i$ has never been frozen but agent $j$ has been frozen before.
    If the value of agent $i$ over the good $g$ allocated to agent $j$ at round $r_j$ is $a$, the envy can be eliminated after removing the good allocated at this final step since there is no envy from $i$ to $j$ during the loop.
    Otherwise, $j$ must be before $i$ in the reversed order of $\pi$. This is because if $i$ is before $j$ in the reversed order of $\pi$, there must exist an augmenting path from $i$ to the good $g$ (from the fact that $i$ can directly take the good $g$ and we keep the remaining allocation to reach a matching with a larger size before the position of $j$) and then $i$ would be frozen.
    Thus, $i$ will pick a good before $j$ at this final step from $\pi$, which means the envy from $i$ to $j$ can be eliminated after removing the good received by $j$ at this final step (if it exists).

\subsection{Proof of Lemma~\ref{lem:efx_fpo_exante_ef}}

For each pair of two agents $i$ and $j$ except for the case that agent $i$ and agent $j$ are in the same $Z_t$ and agent $i$ has never been frozen while agent $j$ has been frozen, from the maintained property as stated before Lemma~\ref{lem:efx_fpo_expost_efx}, it suffices to show the allocation at the final step (Lines \ref{alg_efx_fpo_rem_begin}-\ref{alg_efx_fpo_rem_end}) can ensure \exan EF.
If agent $i$ envies agent $j$ in some allocation $\mathcal{A^{\pi}}$, either (1) $j$ receives a good $g$ but $i$ does not at the final step or (2) $j$ receives a good $g$ valued $b$ by $i$ but $i$ only receives a small good at the final step.
If we compare it to the allocation under $\pi'$ which only exchanges $i$ and $j$ in $\pi$, the envy in the first case can be eliminated because at the final step, $j$ will get nothing and $i$ can receive a good with a weakly larger utility than $u_i(g)$ through an augmenting path, where the worst case is that agent $i$ directly receives $g$ and the remaining allocation is kept.

For the second case (this implies that $j$ is before $i$ in $\pi$ and $j$ also values $g$ at $b$), the only thing we need to show is that agent $i$ and $j$ cannot both get a large good from $i$'s perspective at the final step under $\pi'$.
Here, we assume $i$ is at position $p^{(i)}$ and $j$ is at position $p^{(j)}$ in $\pi$. We have $p^{(i)}>p^{(j)}$.
If the above situation happens and assume $i$ and $j$ receive $g^{(i)}$ and $g^{(j)}$ under $\pi'$, we can find an augmenting path to let agent $i$ receive a large good under $\pi$, which leads to a contradiction. The way to find such a path is as follows.

Since there is a matching for $\pi'$ by replacing the matching edge $(j,g)$ by $(i,g)$ in $\pi$, the size of the maximum matching before position $p^{(i)}$ in $\pi$ is upper bounded by the size of that before $p^{(i)}$ in $\pi'$.
If $j$ values $g^{(j)}$ at $b$, this violates the maximality of the matching up to position $p^{(i)}$ in $\pi$ since the size should be equal in both maximum matchings up to position $p^{(i)}$ in $\pi$ and $\pi'$.
If $j$ values $g^{(j)}$ at $a$, we then can get the size of the maximum matching before position $p^{(i)}$ in $\pi$ is exactly equal to the size of that before position $p^{(i)}$ in $\pi'$ and $j$ values $g^{(i)}$ at $b$. We then can replace the allocation before position $p^{(i)}$ in $\pi$ by that before position $p^{(i)}$ in $\pi'$ and then agent $i$ can receive $g''$ in $\pi$, which violates the maximality of the matching up to position $p^{(i)}$ in $\pi$. Both cases lead to a contradiction, which finishes the proof for this case.

We then come to the exceptional case: agent $i$ and $j$ are in the same $Z_t$ and agent $i$ has never been frozen but agent $j$ has been frozen.
If the value of agent $i$ over the good $g$ allocated to $j$ at round $r_j$ is $a$, there is no envy before the final step and this case can be solved similarly as in the above analysis.

Otherwise, we assume the generated permutation is $\pi$.
We consider another permutation $\pi'$ which only exchanges the positions of $i$ and $j$ in $\pi$.
Since the \exan guarantee at the final step has been shown in the above situation, it suffices to show the \exan EF is also satisfied before the final step.
Because agent $j$ is frozen under $\pi$ and agent $i$ values the good $g$ at $b$, agent $i$ will be frozen under $\pi'$ by directly replacing the agent $j$ in $\pi$.
If agent $i$ is frozen and agent $j$ is not frozen under $\pi'$, then from $i$'s perspective, the differences between the bundles of agent $i$ and $j$ under $\pi$ and $\pi'$ are the same, which ensures the \exan EF.
It suffices to show that agents $i$ and $j$ cannot be frozen simultaneously under $\pi'$.

If both agents $i$ and $j$ are frozen under $\pi'$, we can utilize the similar argument over the maximality of the matching before (or up to) the position of $i$ in $\pi$ to show that $i$ can be also frozen under $\pi$, which violates the assumption. This completes the proof.

\subsection{Descriptions of Fisher Market}

In a Fisher market, we are given a set of $n$ agents denoted by $N$ and a set of $m$ \emph{divisible} goods denoted by $M$.
For each agent $i\in N$, an additive utility function $u_i$  and a budget $e_i\ge 0$ are given.
Given a fractional allocation $\X$ with a price vector $\bf{p}$ over $M$, the spending of an agent $i$ under $(\X,\bf{p})$ is $\sum_{g\in M}p_g X_{ig}$.
We then define the \emph{maximum bang-per-buck (MBB) ratio} $\alpha_i \coloneqq \max_{g\in M} u_i(g)/p_g$ and the \emph{MBB-set} $\rm{MBB}_i \coloneqq \{g \in M \colon u_i(g)/p_g = \alpha_i\}$ for each agent $i\in N$.

A \emph{market equilibrium} $(\X,\bf{p})$ is a fractional allocation $\X$ with a price vector $\bf{p}$ over $M$ which satisfies:
\begin{itemize}
    \item the market clears, i.e. $\sum_{i\in N} X_{ig}=1$ for each good $g\in M$;
    \item the budget is fully spent, i.e., $\sum_{g\in M}p_g X_{ig}=e_i$ for each agent $i\in N$;
    \item each agent $i\in N$ only receives goods in $\rm{MBB}_i$.
\end{itemize}

\begin{theorem}[First Welfare Theorem~\citep{mas1995microeconomic}]
\label{thm:fisher_market}
    If $(\X,\bf{p})$ is a market equilibrium of a Fisher market, $\X$ is fPO.
\end{theorem}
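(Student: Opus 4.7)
The plan is to prove the First Welfare Theorem by contradiction, combining the MBB condition at equilibrium with a market clearing/budget-balance counting argument.

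First, I would establish a bound linking utility to spending. For any agent $i \in N$ and any fractional allocation $\Z$, the definition of the MBB ratio $\alpha_i = \max_{g \in M} u_i(g)/p_g$ yields $u_i(g) \leq \alpha_i \cdot p_g$ for every good $g$, hence by additivity $u_i(Z_i) \leq \alpha_i \sum_{g \in M} p_g Z_{ig}$. Moreover, at the equilibrium allocation $\X$ each agent receives only goods in $\mathrm{MBB}_i$, so the inequality is tight: $u_i(X_i) = \alpha_i \sum_{g \in M} p_g X_{ig} = \alpha_i \cdot e_i$, where the second equality uses that the budget is fully spent at equilibrium.

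Now suppose for contradiction that $\X$ is not fPO. Then there exists a fractional allocation $\Y$ with $u_i(Y_i) \geq u_i(X_i)$ for every $i \in N$ and $u_j(Y_j) > u_j(X_j)$ for some $j \in N$. The case $\alpha_i = 0$ is harmless: then $u_i(g) = 0$ for all $g$, so both $u_i(Y_i)$ and $u_i(X_i)$ vanish and no strict improvement is possible at such an agent; in particular $\alpha_j > 0$. Using the bound above together with $u_i(X_i) = \alpha_i e_i$, the inequality $u_i(Y_i) \geq u_i(X_i)$ becomes $\alpha_i \sum_g p_g Y_{ig} \geq \alpha_i e_i$, which (dividing by $\alpha_i$ when positive, and trivially otherwise) gives $\sum_g p_g Y_{ig} \geq e_i$, with strict inequality when $i = j$.

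Summing over all agents and swapping the order of summation, I get
\[
\sum_{g \in M} p_g \sum_{i \in N} Y_{ig} \;=\; \sum_{i \in N} \sum_{g \in M} p_g Y_{ig} \;>\; \sum_{i \in N} e_i.
\]
Market clearing of $\Y$ gives $\sum_i Y_{ig} = 1$ for every $g$, so the left-hand side equals $\sum_{g \in M} p_g$. On the other hand, the equilibrium conditions on $\X$ (budget fully spent plus market clearing) give $\sum_{i \in N} e_i = \sum_i \sum_g p_g X_{ig} = \sum_g p_g \sum_i X_{ig} = \sum_{g \in M} p_g$. Substituting yields $\sum_g p_g > \sum_g p_g$, the desired contradiction.

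The main conceptual step is the first one, the MBB-to-spending bound: once one recognizes that equilibrium forces $u_i(X_i) = \alpha_i e_i$ while any alternative allocation only satisfies $u_i(Y_i) \leq \alpha_i \sum_g p_g Y_{ig}$, the rest is a clean accounting argument. I do not anticipate any obstacle beyond briefly justifying the division by $\alpha_i$ (handled by the $\alpha_i = 0$ remark above) and being explicit that both $\X$ and $\Y$ are fractional allocations of the same set of goods so market clearing applies to each.
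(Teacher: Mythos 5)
The paper does not actually prove this theorem: it is imported wholesale from the microeconomics literature (Mas-Colell et al.) and used as a black box in the proof of Lemma~\ref{lem:efx_fpo_expost_fpo}, so there is no in-paper argument to compare against. Your proof is the standard one for Fisher markets: bound $u_i(Y_i) \le \alpha_i \sum_{g} p_g Y_{ig}$ via the MBB ratio, use the equilibrium conditions (MBB consumption plus exhausted budgets) to get the equality $u_i(X_i) = \alpha_i e_i$, and derive a contradiction by summing spending over agents and applying market clearing to both $\X$ and $\mathbf{Y}$. That accounting argument is correct and is exactly how the cited result is established.

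The one genuine flaw is your handling of agents with $\alpha_i = 0$. From $\alpha_i \sum_g p_g Y_{ig} \ge \alpha_i e_i$ you cannot conclude $\sum_g p_g Y_{ig} \ge e_i$ ``trivially'' when $\alpha_i = 0$: the inequality degenerates to $0 \ge 0$ and gives no lower bound on agent $i$'s spending under $\mathbf{Y}$, which is what the final summation needs. This is not a removable technicality, because the theorem as literally stated can fail there: with two agents, one good priced at $1$, $u_1 \equiv 0$, $u_2 \equiv 1$, and $e_1 = e_2 = 1/2$, the split $X_1 = X_2 = 1/2$ is a market equilibrium (agent $1$'s MBB set is everything), yet giving the whole good to agent $2$ is a Pareto improvement. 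The correct fix is to state the standing assumption that every agent has $\alpha_i > 0$ (equivalently, positive utility for some good at positive prices --- the Fisher-market analogue of local non-satiation), under which your ``in particular $\alpha_j > 0$'' observation and the division step go through for every agent. In the paper's application this assumption is automatic, since Section~\ref{sec:n-agent-indivisible} assumes $a > 0$, so every agent values every good positively; but as written your ``trivially otherwise'' clause papers over the one place where the argument, and indeed the unqualified statement, breaks.
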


\subsection{Proof of Lemma~\ref{lem:efx_fpo_expost_fpo} when $m<n$}

    We now come to the corner case when $m<n$. The main difference in this case is the price setting for low-price agents in the above analysis, where it is possible that there exists a low-price agent receiving only a small good. Thus, we need to specify the price separately for this case.

    We define the price vector $\bf{p}$ in the following way:
    for each agent $i$ receiving a small good $g$, we set $p_g=a$. We then set the prices of the remaining goods iteratively: if there exists one agent $i$ without pricing her good $g$ such that some priced agent values $g$ at $b$, we set $p_g=b$.
    If there is no further pricing can be made, we set the price as $a$ for each remaining good and call each agent who owns these goods low-price agent, correspondingly.
    We can then follow a similar proof as the case above to prove this.
\end{document}